\documentclass[12pt]{article}

\pdfoutput=1
\usepackage[margin=1in]{geometry}
\usepackage[T1]{fontenc}
\usepackage{mathpazo}
\linespread{1.07}
\usepackage{eucal}
\usepackage{dsfont}
\usepackage{amsmath}
\usepackage{amssymb}
\usepackage{amsthm}
\usepackage{mathtools}
\usepackage{authblk}
\usepackage{tikz}
\usetikzlibrary{calc,shapes}

\usepackage{titlesec}
\usepackage{pifont}
\usepackage{pdfpages}
\usepackage{hyperref}
\hypersetup{
  pdfpagemode=UseNone,
  colorlinks=true,
  citecolor=blue,
  linkcolor=blue,
  urlcolor=blue
}

\frenchspacing

\newtheorem{theorem}{Theorem}
\newtheorem{lemma}[theorem]{Lemma}
\newtheorem{prop}[theorem]{Proposition}
\newtheorem{cor}[theorem]{Corollary}

\theoremstyle{definition}
\newtheorem{definition}[theorem]{Definition}
\newtheorem{remark}[theorem]{Remark}

\titleformat{\section}[hang]{\Large\bfseries\filright}{\thesection.}{.5em}{}
\titleformat{\subsection}[hang]{\large\bfseries\filright}{%
  \thesubsection.}{.5em}{}


\newcommand{\tsp}{\mspace{1mu}}
\newcommand{\htsp}{\mspace{0.5mu}}

\newcommand{\abs}[1]{\lvert #1 \rvert}
\newcommand{\bigabs}[1]{\bigl\lvert #1 \bigr\rvert}
\newcommand{\Bigabs}[1]{\Bigl\lvert #1 \Bigr\rvert}

\renewcommand\natural{\mathbb{N}}
\newcommand\integer{\mathbb{Z}}

\newcommand{\microspace}{\mspace{0.5mu}}
\newcommand{\op}[1]{\operatorname{#1}}
\newcommand{\tr}{\operatorname{Tr}}

\newcommand{\ket}[1]{\lvert\microspace #1 \microspace \rangle}
\newcommand{\bigket}[1]{\bigl\lvert\microspace #1 \microspace \bigr\rangle}

\newcommand{\bra}[1]{\langle\microspace #1 \microspace \rvert}
\newcommand{\bigbra}[1]{\bigl\langle\microspace #1 \microspace \bigr\rvert}

\newcommand\I{\mathds{1}}

\newcommand{\setft}[1]{\mathrm{#1}}
\newcommand{\Density}{\setft{D}}
\newcommand{\Pos}{\setft{Pos}}

\newenvironment{mylist}[1]{\begin{list}{}{
	\setlength{\leftmargin}{#1}
	\setlength{\rightmargin}{0mm}
	\setlength{\labelsep}{2mm}
	\setlength{\labelwidth}{8mm}
	\setlength{\itemsep}{0mm}}}
	{\end{list}}

\newcommand{\reg}[1]{\mathsf{#1}}
\newcommand{\class}[1]{\textup{#1}}

\newcommand\X{\mathcal{X}}
\newcommand\Y{\mathcal{Y}}

\newcommand\A{\mathcal{A}}
\newcommand\B{\mathcal{B}}

\newcommand\C{\mathcal{C}}

\renewcommand\P{\mathcal{P}}

\definecolor{White}{rgb}{1,1,1}
\definecolor{Black}{rgb}{0,0,0}
\definecolor{LightGray}{rgb}{.81,.81,.81}
\colorlet{ChannelColor}{LightGray}
\colorlet{ChannelTextColor}{Black}
\colorlet{ReadoutColor}{White}
 
\begin{document}

\title{Complexity limitations on one-turn\\ quantum refereed games}
\author{Soumik Ghosh}
\author{John Watrous}

\affil{
  Institute for Quantum Computing and School of Computer Science\protect\\
  University of Waterloo, Canada\vspace{2mm}
}

\renewcommand\Affilfont{\normalsize\itshape}
\renewcommand\Authfont{\large}
\setlength{\affilsep}{4mm}
\renewcommand\Authsep{\rule{10mm}{0mm}}
\renewcommand\Authand{\rule{10mm}{0mm}}

\date{February 4, 2020}

\maketitle

\begin{abstract}
  This paper studies complexity theoretic aspects of quantum refereed games,
  which are abstract games between two competing players that send quantum
  states to a referee, who performs an efficiently implementable joint
  measurement on the two states to determine which of the player wins.
  The complexity class $\class{QRG}(1)$ contains those decision problems for
  which one of the players can always win with high probability on
  yes-instances and the other player can always win with high probability on
  no-instances, regardless of the opposing player's strategy.
  This class trivially contains $\class{QMA} \cup \class{co-QMA}$ and is known
  to be contained in $\class{PSPACE}$.
  We prove stronger containments on two restricted variants of this class.
  Specifically, if one of the players is limited to sending a classical
  (probabilistic) state rather than a quantum state, the resulting complexity
  class $\class{CQRG}(1)$ is contained in $\exists\cdot\class{PP}$
  (the nondeterministic polynomial-time operator applied to $\class{PP}$);
  while if both players send quantum states but the referee is forced to
  measure one of the states first, and incorporates the classical outcome of
  this measurement into a measurement of the second state, the resulting class
  $\class{MQRG}(1)$ is contained in $\class{P}\cdot\class{PP}$
  (the unbounded-error probabilistic polynomial-time operator applied to
  $\class{PP}$).
\end{abstract}

\section{Introduction}

Abstract notions of games have long played an important role in complexity
theory.
For example, combinatorial games provide complete problems for various
complexity classes \cite{DemaineH09}, the notion of alternation is naturally
described in game-theoretic terms \cite{ChandraKS1981}, and interactive proof
systems \cite{Babai1985,BabaiM1988,GoldwasserMR1985,GoldwasserMR1989} and many
variants of them are naturally formulated as games
\cite{Condon1987,FeigenbaumKS1995}.

This paper is concerned with games between two competing, computationally
unbounded players, administered by a computationally bounded referee.
In the classical setting, complexity theoretic aspects of games of this form
were investigated in the 1990s by Koller and Megiddo \cite{KollerM1992},
Feigenbaum, Koller, and Shor \cite{FeigenbaumKS1995}, Condon, Feigenbaum, Lund,
and Shor \cite{CondonFLS1995,CondonFLS1997}, and Feige and Kilian
\cite{FeigeK1997}.
Quantum computational analogues of these games were later considered in
\cite{GutoskiW2005}, \cite{Gutoski2005}, \cite{GutoskiW2007}, and
\cite{JainW2009}.

Our focus will be on \emph{one-turn refereed games}, in which the players and
the referee first receive a common input string, and then each player sends a
single polynomial-length (quantum or classical) message to the referee, who
then decides which player has won.
We will refer to the two competing players as \emph{Alice} and \emph{Bob} for
convenience.
In the classical case Alice and Bob's messages may in general be described by
probability distributions over strings, while in the quantum case Alice and
Bob's messages are described by mixed quantum states, which are represented by
density operators.
In both cases, the referee's decision process must be specified by a
polynomial-time generated family of (quantum or classical) circuits.
Two complexity classes are defined---$\class{RG}(1)$ in the classical
setting\footnote{%
  We note explicitly that this nomenclature clashes with \cite{FeigeK1997},
  which defines $\class{RG}(1)$ in terms of \emph{one-round} (i.e., two-turn)
  refereed games, which is $\class{RG}(2)$ with respect to our naming
  conventions.}
and $\class{QRG}(1)$ in the quantum setting---consisting of all promise
problems $A = (A_{\text{yes}},A_{\text{no}})$ for which there exists a game
(either classical or quantum, respectively) such that Alice can win with high
probability on inputs $x\in A_{\text{yes}}$ and Bob can win with high
probability on inputs $x\in A_{\text{no}}$, regardless of the other player's
behavior.

In essence, the complexity classes $\class{RG}(1)$ and $\class{QRG}(1)$ may be
viewed as extensions of the classes $\class{MA}$ and $\class{QMA}$ in which
\emph{two competing Merlins}, one whose aim is to convince the referee (whose
role is analogous to Arthur, also called the verifier, in the case of 
$\class{MA}$ and $\class{QMA}$) that the input string is a yes-instance of
a given problem, and the other whose aim is to convince the referee that the
input string is a no-instance.

It is known that the complexity class $\class{RG}(1)$ is equal to
$\class{S}_2^p$, which refers to the second level of the
\emph{symmetric polynomial-time hierarchy} introduced by Canetti
\cite{Canetti1996} and Russell and Sundaram \cite{RussellS1998}.
This class is most typically defined in terms of quantifiers that suggest
games in which Alice and Bob choose polynomial-length strings (as opposed to
probability distributions of strings) to send to the referee, but the class
does not change if one adopts a bounded-error definition in which Alice and Bob
are allowed to make use of randomness \cite{FortnowIKU2008}.
Moreover, the class does not change if the referee is permitted the use of
randomness, again assuming a bounded-error definition.
An essential fact through which these equivalence may be proved,
due to Alth\"ofer \cite{Althofer1994} and Lipton and Young \cite{LiptonY1994},
is that non-interactive randomized games always admit near-optimal strategies
that are uniform over polynomial-size sets of strings.
It is also known that $\class{RG}(1)$ is closed under Cook reductions
\cite{RussellS1998} and satisfies
$\class{RG}(1) \subseteq \class{ZPP}^{\class{NP}}$ \cite{Cai2007}.

In contrast to the containment
$\class{RG}(1) \subseteq \class{ZPP}^{\class{NP}}$, the best upper-bound known
for $\class{QRG}(1)$ is that this class is contained in $\class{PSPACE}$
\cite{JainW2009}.
It is reasonable to conjecture that a stronger upper-bound on $\class{QRG}(1)$
can be proved.
Indeed, Gutoski and Wu \cite{GutoskiW2013} proved that
$\class{QRG}(2) = \class{PSPACE}$, where $\class{QRG}(2)$ is a two-turn
analogue of $\class{QRG}(1)$, in which the referee first sends
polynomial-length quantum messages to Alice and Bob, then receives responses
from them, and finally decides which player wins.
The classical analogue of $\class{QRG}(2)$, which we denote by $\class{RG}(2)$,
is also known to be equal to $\class{PSPACE}$ \cite{FeigeK1997}.

In this work we consider two restricted variants of $\class{QRG}(1)$, and
prove stronger upper-bounds than $\class{PSPACE}$ on these restricted
variants.
The first variant is one in which Alice is limited to sending a classical
message to the referee, while Bob is free to send a quantum state.
The resulting class, which we call $\class{CQRG}(1)$, is proved to be
contained in $\exists\cdot\class{PP}$ (the class obtained when the
nondeterministic polynomial-time operator is applied to $\class{PP}$).
This containment follows from an application of the
Alth\"ofer--Lipton--Young technique mentioned above, although in the quantum
setting the proof requires relatively recent tail bounds on sums of
matrix-valued random variables, as opposed to a more standard
Hoeffding--Chernoff type of bound that suffices in the classical case.
In particular, we make use of a tail bound of this sort due to Tropp
\cite{Tropp2012}.
The second variant we consider is one in which both Alice and Bob are free to
send quantum states, but where the referee must first measure Alice's state and
then incorporate the classical outcome of this measurement into a measurement
of Bob's state.
We call the corresponding class $\class{MQRG}(1)$, and prove the containment
$\class{MQRG}(1)\subseteq\class{P}\cdot\class{PP}$
(the class obtained when the unbounded error probabilistic polynomial-time
operator is applied to $\class{PP}$).

\section{Preliminaries}
\label{sec:preliminaries}

We assume the reader is familiar with basic aspects of computational complexity
theory and quantum information and computation.
There are four subsections included in this preliminaries section, the first of
which clarifies a few specific concepts, conventions, and definitions
concerning complexity theory.
The second subsection is concerned specifically with counting complexity, and
presents a development of some results on this topic that are central to this
paper.
Proofs are included because these results represent minor generalizations of
known results on counting complexity.
The third subsection discusses a few specific definitions and concepts from
quantum information and computation, along with a proof of a fact that may
be considered a known result, but for which a complete proof does not appear
in published form.
The final subsection states the tail bound due to Tropp mentioned above.

\subsection*{Complexity theory basics}

Throughout this paper, languages, promise problems, and functions on strings
are assumed to be over the binary alphabet $\Sigma = \{0,1\}$.
The set of natural numbers, including~0, is denoted $\mathbb{N}$.

A function of the form $p:\natural\rightarrow\natural$ is said to be
\emph{polynomially bounded} if there exists a deterministic Turing machine
that runs in polynomial time and outputs $0^{p(n)}$ on input $0^n$ for all
$n\in\natural$.
Unless it is explicitly indicated otherwise, the input of a given
polynomially bounded function $p$ is assumed to be the natural number
$\abs{x}$, for whatever input string $x\in\Sigma^{\ast}$ is being considered at
that moment.
With this understanding in mind, we will write $p$ in place of $p(\abs{x})$
when referring to the natural number output that is determined in this way.
For example, in Definition~\ref{def:NP and PP operators} below, all of the
occurrences of $p$ in the displayed equations are short for $p(\abs{x})$.
This convention helps to make formulas and equations more clear and less
cluttered.

A promise problem is a pair $A = (A_{\text{yes}},A_{\text{no}})$ of sets
of strings $A_{\text{yes}},A_{\text{no}}\subseteq\Sigma^{\ast}$ with
$A_{\text{yes}} \cap A_{\text{no}} = \varnothing$.
Strings in $A_{\text{yes}}$ represent yes-instances of a decision problem,
strings in $A_{\text{no}}$ represent no-instances, and all other strings
represent ``don't care'' inputs for which no restrictions are placed on
a hypothetical computation for that problem.

We fix a pairing function that efficiently encodes two strings
$x,y\in\Sigma^{\ast}$ into a single binary string denoted
$\langle x , y \rangle \in\Sigma^{\ast}$, and we assume
that this function satisfies some simple properties:
\begin{mylist}{8mm}
\item[1.]
  The length of the pair $\langle x , y \rangle$ depends only on the lengths
  $\abs{x}$ and $\abs{y}$, and is polynomial in these lengths.
\item[2.]
  The computation of $x$ and $y$ from $\langle x,y\rangle$, as well as the
  computation of $\langle x,y\rangle$ from $x$ and $y$, can be performed
  deterministically in polynomial time.
\end{mylist}
One suitable choice for such a function is suggested by the equation
\begin{equation}
  \langle a_1 a_2 \cdots a_n, b_1 b_2 \cdots b_m
  \rangle
  = 0\htsp a_1\htsp 0\htsp a_2\htsp \cdots \htsp 0\htsp a_n \htsp 1
  \htsp b_1\htsp b_2\htsp \cdots\htsp b_m
\end{equation}
for $a_1, a_2,\ldots,a_n,b_1,b_2,\ldots,b_m\in\Sigma$.
Any such pairing function may be extended recursively to obtain a tuple
function for any fixed number of inputs by taking
\begin{equation}
  \langle x_1,x_2,x_3,\ldots,x_k\rangle =
  \langle \langle x_1,x_2\rangle, x_3, \ldots, x_k\rangle
\end{equation}
for strings $x_1,\ldots,x_k\in\Sigma^{\ast}$, where $k\geq 3$.
Hereafter, when we refer to the computation of any function taking
multiple string-valued arguments, we assume that these input strings have been
encoded into a single string using this tuple function.
For instance, when $f$ is a function that represents a computation,
we write $f(x,y,z)$ rather than $f(\langle x,y,z\rangle)$.

Finally, we define the nondeterministic and probabilistic polynomial-time
operators, which may be applied to an arbitrary complexity class, as follows.

\begin{definition}
  \label{def:NP and PP operators}
  For a given complexity class of languages $\C$, the complexity classes
  $\exists\cdot\C$ and $\class{P}\cdot\C$ are defined as follows.
  \begin{mylist}{8mm}
  \item[1.]
    The complexity class $\exists\cdot\C$ contains all promise problems
    $A = (A_{\text{yes}},A_{\text{no}})$ for which there exists a language
    $B\in\C$ and a polynomially bounded function $p$ such that these two
    implications hold:
    \begin{equation}
      \begin{aligned}
        x\in A_{\text{yes}} & \Rightarrow
        \Bigl\{y\in\Sigma^{p}\,:\,\langle x,y\rangle\in B\Bigr\}
        \not=\varnothing,\\
        x\in A_{\text{no}} & \Rightarrow
        \Bigl\{y\in\Sigma^{p}\,:\,\langle x,y\rangle\in B\Bigr\}
        =\varnothing.        
      \end{aligned}
    \end{equation}
  \item[2.]
    The complexity class $\class{P}\cdot\C$ contains all promise problems
    $A = (A_{\text{yes}},A_{\text{no}})$ for which there exists a language
    $B\in\C$ and a polynomially bounded function $p$ such that these two
    implications hold:
    \begin{equation}
      \begin{aligned}
        x\in A_{\text{yes}} & \Rightarrow
        \Bigabs{\Bigl\{y\in\Sigma^p\,:\,\langle x,y\rangle\in B\Bigr\}}
        > \frac{1}{2} \cdot 2^p,\\[2mm]
        x\in A_{\text{no}} & \Rightarrow
        \Bigabs{\Bigl\{y\in\Sigma^p\,:\,\langle x,y\rangle\in B\Bigr\}}
        \leq \frac{1}{2} \cdot 2^p.
      \end{aligned}
    \end{equation}
  \end{mylist}
\end{definition}

\subsection*{Counting complexity}

Counting complexity is principally concerned with the number of solutions
to certain computational problems.
Readers interested in learning more about counting complexity and some of its
applications are referred to the survey paper of Fortnow \cite{Fortnow1997}.
As was suggested at the beginning of the current section, we will require
some basic results on counting complexity that represent minor generalizations
of known results.
We begin with the following definition.

\begin{definition}
  \label{definition:Gap.C}
  Let $\C$ be any complexity class of languages over the alphabet $\Sigma$.
  A function $f:\Sigma^{\ast} \rightarrow \integer$ is a
  $\class{Gap}\cdot\C$ function if there exist languages $A,B\in\C$ and
  a polynomially bounded function $p$ such that
  \begin{equation}
    f(x) = \bigabs{\bigl\{ y\in\Sigma^p \,:\, \langle x,y\rangle \in A\bigr\}}
    -\bigabs{\bigl\{y\in\Sigma^p \,:\, \langle x,y\rangle \in B\bigr\}}      
  \end{equation}
  for all $x\in\Sigma^{\ast}$.
\end{definition}

We observe that this definition is slightly non-standard, as gap functions
are usually defined in terms of differences between the number
of accepting and rejecting computations of nondeterministic machines
(as opposed to a difference involving two potentially unrelated languages $A$
and $B$).
It is also typical that one focuses on specific choices for~$\C$,
particularly $\C = \class{P}$.
Our definition is, however, equivalent to the traditional definition in this
case, and we will write $\class{GapP}$ rather than $\class{Gap}\cdot\class{P}$
so as to be consistent with the standard name for this class of functions.
We will also be interested in the case $\C = \class{PP}$, which yields a class
of functions $\class{Gap}\cdot\class{PP}$ that is less commonly considered.

The following proposition is immediate from the definitions of
$\class{Gap}\cdot\C$ and $\class{P}\cdot\C$.

\begin{prop}
  \label{prop:Gap.C versus P.C}
  Let $\C$ be a complexity class of languages that is closed under
  complementation.
  A promise problem $A = (A_{\text{yes}},A_{\text{no}})$ is contained in
  $\class{P}\cdot\C$ if and only if there exists a $\class{Gap}\cdot\C$
  function $f$ such that
  \begin{equation}
    \begin{aligned}
      x\in A_{\text{yes}} & \Rightarrow f(x) > 0,\\
      x\in A_{\text{no}} & \Rightarrow f(x) \leq 0.
    \end{aligned}
  \end{equation}
\end{prop}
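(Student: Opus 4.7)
The plan is to prove each direction by a direct construction, using closure of $\C$ under complementation to convert the difference that defines a $\class{Gap}\cdot\C$ function into a sum that can be compared against $2^p/2$.

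For the ``only if'' direction, suppose $A\in\class{P}\cdot\C$ is witnessed by a language $B\in\C$ and a polynomially bounded function $p$. Because $\C$ is closed under complementation, the complement $\bar B\in\C$ as well, and I would define
\begin{equation}
  f(x) = \bigabs{\bigl\{y\in\Sigma^p : \langle x,y\rangle\in B\bigr\}}
       - \bigabs{\bigl\{y\in\Sigma^p : \langle x,y\rangle\in \bar B\bigr\}},
\end{equation}
which is a $\class{Gap}\cdot\C$ function by definition. Since the two counts sum to $2^p$, the function $f$ equals $2 N(x) - 2^p$, where $N(x)$ denotes the first count. The defining inequalities of $\class{P}\cdot\C$, namely $N(x) > 2^p/2$ on yes-instances and $N(x)\le 2^p/2$ on no-instances, then translate immediately into $f(x)>0$ and $f(x)\le 0$ respectively.

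For the ``if'' direction, suppose $f(x) = N_A(x) - N_B(x)$ with $A,B\in\C$ and witness length $p$, where $N_A(x)$ and $N_B(x)$ denote the corresponding counts over $\Sigma^p$. I would introduce one extra bit of address to combine $A$ with $\bar B$ into a single language on witnesses of length $q = p+1$, defining $B'$ by declaring $\langle x, yz\rangle\in B'$, for $y\in\Sigma^p$ and $z\in\Sigma$, to hold precisely when either $z=0$ and $\langle x,y\rangle\in A$, or $z=1$ and $\langle x,y\rangle\in\bar B$. A direct count then gives
\begin{equation}
  \bigabs{\bigl\{y'\in\Sigma^q : \langle x,y'\rangle\in B'\bigr\}}
  = N_A(x) + \bigl(2^p - N_B(x)\bigr) = f(x) + 2^p,
\end{equation}
and comparing this to $2^{q-1} = 2^p$ recovers the $\class{P}\cdot\C$ conditions exactly: yes-instances satisfy $f(x)>0$, hence strictly exceed $2^{q-1}$, while no-instances satisfy $f(x)\le 0$, hence fall at or below $2^{q-1}$.

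The only delicate step is arguing that $B'$ itself belongs to $\C$: membership in $B'$ is decided by parsing the pairing function, inspecting the final bit $z$, and then performing one membership query to either $A$ or $\bar B$. Both $A$ and $\bar B$ lie in $\C$ by hypothesis and by closure under complementation, and this kind of polynomial-time preprocessing is absorbable by any reasonable complexity class, in particular by the classes $\class{P}$ and $\class{PP}$ relevant to the rest of the paper. I would note this mild assumption explicitly rather than try to formulate the proposition in complete generality, since the construction itself is the real content and the bookkeeping on counts is what makes the equivalence work.
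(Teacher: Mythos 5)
The paper offers no proof of this proposition---it is dismissed as ``immediate from the definitions''---so there is nothing to compare against, but your construction is exactly what is intended and both directions check out. For the ``only if'' direction, writing $f(x) = N(x) - (2^p - N(x)) = 2N(x) - 2^p$ with $\bar B$ supplied by closure under complementation is clean and correct. For the ``if'' direction, merging the two languages into a single $B'$ on $(p{+}1)$-bit witnesses and observing that the count equals $f(x) + 2^p$ against a threshold of $2^{q-1} = 2^p$ is the right bookkeeping. Your closing observation is a legitimate one: as stated, the proposition hypothesizes only closure under complementation, but the ``if'' direction really does need $B'\in\C$, which amounts to closure under joins and a polynomial-time Karp reduction (to reshuffle the pairing encoding). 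The paper's surrounding lemmas explicitly assume nontriviality, closure under joins, and closure under truth-table reductions for $\C$, and the only instantiations used are $\C = \class{P}$ and $\C = \class{PP}$, both of which enjoy all of these properties, so nothing breaks downstream---but you are right that the hypothesis on $\C$ in the proposition, taken in isolation, is slightly weaker than what the forward-direction construction actually uses. One minor notational point: reusing $A$ and $B$ as names for the two languages defining $f$ collides with the promise problem $A$ and the $\class{P}\cdot\C$ language $B$; better to call them $A_0$ and $A_1$ as Definition~\ref{definition:Gap.C} does.
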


A key feature of the class of $\class{GapP}$ functions that facilitates its
use is that it possess strong closure properties.
This is true more generally for the class $\class{Gap}\cdot\C$ provided that
$\C$ itself possesses certain properties.
For the closure properties we require, it suffices that $\C$ is
nontrivial (meaning that $\C$ contains at least one language that is not
equal to $\varnothing$ or $\Sigma^{\ast}$) and is closed under the join
operation as well as polynomial-time truth-table reductions.
(The \emph{join} of languages $A$ and $B$ is defined as
$\{0x\,:\,x\in A\}\cup\{1x\,:\,x\in B\}$.)
These properties are, of course, possessed by both $\class{P}$ and
$\class{PP}$, with the closure of $\class{PP}$ under truth table reductions
having been proved by Fortnow and Reingold \cite{FortnowR1996} based on
methods developed by Beigel, Reingold, and Spielman \cite{BeigelRS1995}.

The lemmas that follow establish the specific closure properties we require.
For the first property the assumption that $\C$ is closed under joins and
polynomial-time truth-table reductions is not required; closure under Karp
reductions suffices.

\begin{lemma}
  \label{lemma:Gap.C closed under exponential sums}
  Let $\C$ be a nontrivial complexity class of languages that is closed under
  Karp reductions.
  Let $f\in\class{Gap}\cdot\C$ and let $p$ be a polynomially bounded function.
  The function
  \begin{equation}
    g(x) = \sum_{y\in\Sigma^p} f(x,y)
  \end{equation}
  is a $\class{Gap}\cdot\C$ function.
\end{lemma}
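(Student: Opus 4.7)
The plan is to unpack the hypothesis that $f \in \class{Gap}\cdot\C$ and then reindex so that the outer summation variable $y$ and the inner $\class{Gap}\cdot\C$ witness $z$ get merged into a single witness string $w$. Let $A, B \in \C$ and a polynomially bounded $q$ witness that $f \in \class{Gap}\cdot\C$, so that
\begin{equation}
f(x,y) = \bigabs{\bigl\{z \in \Sigma^q : \langle\langle x,y\rangle, z\rangle \in A\bigr\}} - \bigabs{\bigl\{z \in \Sigma^q : \langle\langle x,y\rangle, z\rangle \in B\bigr\}}
\end{equation}
for all strings $x,y$. Because our pairing function's output length depends only on its argument lengths, the quantity $q(|\langle x, y\rangle|)$ depends only on $|x|$ whenever $|y| = p(|x|)$; call this polynomially bounded function $q'$, and set $r = p + q'$.

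Next, I would define languages $A', B'$ on pairs $\langle x, w\rangle$ by parsing $w \in \Sigma^r$ under its unique factorization $w = yz$ with $|y| = p$ and $|z| = q'$, declaring $\langle x, w\rangle \in A'$ iff $\langle\langle x, y\rangle, z\rangle \in A$, and analogously for $B'$. The map $\langle x, w\rangle \mapsto \langle\langle x, y\rangle, z\rangle$ is polynomial-time computable (read $x$, compute $p(|x|)$ and $q'(|x|)$, split $w$, rebuild the nested pair), exhibiting Karp reductions $A' \le_m A$ and $B' \le_m B$. Closure of $\C$ under Karp reductions then places $A', B'$ in $\C$.

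Finally, a direct reindexing of the double sum gives
\begin{equation}
g(x) = \sum_{y \in \Sigma^p} f(x,y) = \bigabs{\bigl\{w \in \Sigma^r : \langle x, w\rangle \in A'\bigr\}} - \bigabs{\bigl\{w \in \Sigma^r : \langle x, w\rangle \in B'\bigr\}},
\end{equation}
so $g \in \class{Gap}\cdot\C$ with witnesses $A', B'$ and length $r$. The only subtlety worth flagging is confirming that every length parameter stays polynomially bounded in $|x|$ alone, which follows from the length-only dependence of the pairing function; beyond this, no step is delicate. In particular, neither the nontriviality hypothesis nor stronger closure properties of $\C$ are used here, consistent with the lemma's statement that Karp-reduction closure alone suffices for this closure property.
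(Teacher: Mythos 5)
Your proof takes essentially the same approach as the paper: unpack the $\class{Gap}\cdot\C$ witnesses for $f$, exploit the length-only dependence of the pairing function to merge the summation variable with the inner witness, and define the new languages by re-indexing, invoking closure under Karp reductions. One small caveat: your concluding remark that nontriviality is unused is not quite right, and the paper does invoke it — the Karp reduction $A' \leq_m A$ must send strings that are not well-formed pairs (or whose $w$-component has the wrong length) to some string outside $A$, and such a string need not exist if $A = \Sigma^*$; nontriviality of $\C$ handles this edge case (it guarantees $\class{P}\subseteq\C$, so the ``valid-pair'' language is in $\C$ even when $A$ is trivial).
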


\begin{proof}
  By the assumption that $f\in\class{Gap}\cdot\C$, there exists a polynomially
  bounded function $q$ and languages $A_0,A_1\in\C$ such that
  \begin{equation}
    f(x,y) = \bigabs{\bigl\{
      z\in\Sigma^{q(\abs{\langle x,y\rangle})} \,:\,
      \langle x,y,z\rangle \in A_0\bigr\}}
    -\bigabs{\bigl\{
      z\in\Sigma^{q(\abs{\langle x,y\rangle})} \,:\,
      \langle x,y,z\rangle \in A_1\bigr\}}
  \end{equation}
  for all $x\in\Sigma^{\ast}$ and $y\in\Sigma^p$.
  By the assumptions on our pairing function described above, it is the case
  that $\abs{\langle x,y\rangle}$ depends only on $\abs{x}$ and $\abs{y}$, and
  therefore there exists a (necessarily polynomially bounded) function $r$ such
  that $r(\abs{x}) = p(\abs{x}) + q(\abs{\langle x,y\rangle})$ for all
  $x\in\Sigma^{\ast}$ and $y\in\Sigma^p$.
  Define
  \begin{equation}
    \begin{aligned}
      B_0 & = \bigl\{ \langle x,yz\rangle\,:\,
      y \in \Sigma^p,\; z \in \Sigma^{q(\abs{\langle x,y\rangle})},\;
      \langle x,y,z\rangle \in A_0\bigr\},\\[1mm]
      B_1 & = \bigl\{ (x,yz)\,:\,
      y \in \Sigma^p,\; z \in \Sigma^{q(\abs{\langle x,y\rangle})},\;
      \langle x,y,z\rangle \in A_1\bigr\}.
    \end{aligned}
  \end{equation}
  By the nontriviality and closure of $\C$ under Karp reductions, it is evident
  that $B_0,B_1\in\C$.
  It may be verified that
  \begin{equation}
    g(x) =
    \bigabs{\bigl\{w\in\Sigma^r \,:\,\langle x,w\rangle \in B_0\bigr\}}
    - \bigabs{\bigl\{w\in\Sigma^r \,:\,\langle x,w\rangle \in B_1\bigr\}}
  \end{equation}
  for all $x\in\Sigma^{\ast}$, and therefore $g\in\class{Gap}\cdot\C$.
\end{proof}


For the next lemma, and elsewhere in the paper, we will use the following
notation for convenience: $\Sigma^n_1$ denotes the set of all strings over the
binary alphabet $\Sigma$ that have length $n$ and contain exactly one
occurrence of the symbol 1.
It is therefore the case that $\abs{\Sigma^n_1} = n$.

\begin{lemma}
  \label{lemma:Gap.C closed under polynomial products}
  Let $\C$ be a nontrivial complexity class of languages that is closed under
  joins and polynomial-time truth table reductions.
  Let $f\in\class{Gap}\cdot\C$ and let $p$ be a polynomially bounded function.
  The function
  \begin{equation}
    g(x) = \prod_{y\in\Sigma_1^p} f(x,y)
  \end{equation}
  is a $\class{Gap}\cdot\C$ function.
\end{lemma}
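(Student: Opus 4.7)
The plan is to expand the product in $g(x)$ as a signed sum of counting functions and to realize the positive and negative pieces as membership counts in two languages drawn from $\C$. Using $f\in\class{Gap}\cdot\C$, fix $A_0,A_1\in\C$ and a polynomially bounded function $q$ with
\[
f(x,y)=\bigabs{\bigl\{z\in\Sigma^{q}:\langle x,y,z\rangle\in A_0\bigr\}}-\bigabs{\bigl\{z\in\Sigma^{q}:\langle x,y,z\rangle\in A_1\bigr\}}.
\]
Enumerate the $p$ elements of $\Sigma_1^p$ as $y_1,\ldots,y_p$ and distribute over the $p$ differences:
\[
g(x)=\sum_{b\in\Sigma^{p}}(-1)^{b_1\oplus\cdots\oplus b_p}\prod_{i=1}^{p}\bigabs{\bigl\{z\in\Sigma^{q}:\langle x,y_i,z\rangle\in A_{b_i}\bigr\}}.
\]
Each inner product counts the $p$-tuples $(z_1,\ldots,z_p)$ satisfying $\langle x,y_i,z_i\rangle\in A_{b_i}$ for every $i$, and the sign is determined by the parity of the bit string $b$.

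This motivates the two languages
\begin{align*}
B_+ &= \bigl\{\langle x,bz_1\cdots z_p\rangle : b_1\oplus\cdots\oplus b_p=0\text{ and }\langle x,y_i,z_i\rangle\in A_{b_i}\text{ for all }i\bigr\},\\
B_- &= \bigl\{\langle x,bz_1\cdots z_p\rangle : b_1\oplus\cdots\oplus b_p=1\text{ and }\langle x,y_i,z_i\rangle\in A_{b_i}\text{ for all }i\bigr\},
\end{align*}
together with a polynomially bounded $r$ chosen so that every string $bz_1\cdots z_p$ (with $b\in\Sigma^p$ and each $z_i\in\Sigma^{q}$) has length exactly $r$. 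With this in place,
\[
g(x)=\bigabs{\bigl\{w\in\Sigma^r:\langle x,w\rangle\in B_+\bigr\}}-\bigabs{\bigl\{w\in\Sigma^r:\langle x,w\rangle\in B_-\bigr\}},
\]
and it only remains to show $B_+,B_-\in\C$. Deciding $B_\pm$ on input $\langle x,w\rangle$ parses $w$, issues $p$ queries of the form $\langle x,y_i,z_i\rangle$ to the join of $A_0$ and $A_1$ (with the bit $b_i$ selecting which side of the join to consult), and combines the $p$ Boolean answers with a parity check on $b$ via a fixed polynomial-size Boolean circuit. This is exactly a polynomial-time truth-table reduction to a single language in $\C$, so by the closure hypotheses on $\C$ (nontriviality guarantees a suitable truth-table target exists) we obtain $B_+,B_-\in\C$, and hence $g\in\class{Gap}\cdot\C$.

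The only piece of bookkeeping is the common witness length: the final formula counts witnesses of a single length $r$, so $\abs{bz_1\cdots z_p}$ must depend only on $\abs{x}$. Because the paper's pairing convention makes $\abs{\langle x,y_i\rangle}$ depend only on $\abs{x}$ and $\abs{y_i}=p$, the value $q(\abs{\langle x,y_i\rangle})$ is the same integer for every $i$, so $r=p+pq$ (with uniform padding where needed) suffices. I do not anticipate any deeper obstacle; the argument reduces to a polynomial identity together with one invocation of truth-table closure in the style of Fortnow--Reingold.
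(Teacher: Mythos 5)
Your proof is correct, and it takes a mildly but genuinely different route from the paper's. The paper first replaces $A_0$ and $A_1$ by the disjoint languages $A_0\cap\overline{A_1}$ and $A_1\cap\overline{A_0}$ (which does not change $f$) and then defines its two counted languages with witness $z_1\cdots z_p$ only, quantifying \emph{existentially} over a parity string inside the language: $\langle x, z_1\cdots z_p\rangle\in B_0$ iff there exists an even-parity $w\in\Sigma^p$ with $\langle x,y_i,z_i\rangle\in A_{w_i}$ for all $i$ (and similarly for $B_1$ with odd parity). Disjointness is what makes this work cleanly: it guarantees that for each tuple $(z_1,\ldots,z_p)$ at most one such $w$ exists, so the count is exactly the signed expansion of the product, and the combining Boolean function for the truth-table reduction is a simple conjunction plus a parity check rather than an exponential-size search over $w$. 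You instead fold the sign-selector string $b$ directly into the counted witness $b\,z_1\cdots z_p$, so the expansion of $\prod_{y\in\Sigma^p_1} f(x,y)$ is realized term by term and each membership test becomes a deterministic conjunction of $p$ join-queries together with a parity check on $b$. This makes the disjointification step unnecessary and makes the truth-table reduction visibly polynomial with no case analysis. Both arguments use exactly the same closure hypotheses on $\C$ (nontriviality, joins, polynomial-time truth-table reductions); your version trades a small amount of preprocessing for a slightly longer witness string, and it is arguably the more transparent of the two. One tiny nit: nontriviality is needed so that closure under truth-table reductions actually forces $\class{P}\subseteq\C$ (and hence makes the constructed languages land in $\C$), not merely so that ``a truth-table target exists''; but this does not affect the substance of your argument.
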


\begin{proof}
  Given that $f\in\class{Gap}\cdot\C$, there exists a polynomially bounded
  function $q$ and languages $A_0,A_1\in\C$ such that
  \begin{equation}
    \label{eq:f(x,y)}
    f(x,y) = \Bigabs{\Bigl\{z \in \Sigma^{q(\abs{(x,y)})}\,:\,
      \langle x,y,z \rangle\in A_0\Bigr\}}
    - \Bigabs{\Bigl\{z\in\Sigma^{q(\abs{(x,y)})}\,:\,
      \langle x,y,z \rangle \in A_1\Bigr\}}
  \end{equation}
  for all $x,y\in\Sigma^{\ast}$.
  We may assume further that $A_0$ and $A_1$ are disjoint languages, for if
  they are not, we may replace $A_0$ and $A_1$ with $A_0 \cap \overline{A_1}$
  and $A_1 \cap \overline{A_0}$, respectively;
  this does not change the value of the right-hand side of the equation
  \eqref{eq:f(x,y)}, and the languages $A_0 \cap \overline{A_1}$ and
  $A_1 \cap \overline{A_0}$ must both be contained in $\C$ for
  $A_0,A_1\in\C$ by the closure of $\C$ under joins and truth-table
  reductions.
  
  By the assumptions on our pairing function described above, there exists a
  polynomially bounded function $r$ such that $r(\abs{x}) = q(\abs{(x,y)})$ for
  all $x\in\Sigma^{\ast}$ and $y\in\Sigma^p$.
  We will write $y_1,\ldots,y_p$ to denote the elements of $\Sigma^p_1$ sorted
  in lexicographic order.
  Define two languages $B_0$ and $B_1$ as follows:
  \begin{mylist}{8mm}
  \item[$\bullet$]
    $B_0$ is the language of all pairs $\langle x,z_1\cdots z_p\rangle$, where
    $x\in\Sigma^{\ast}$ and $z_1,\ldots,z_p\in\Sigma^r$, for which there exists
    a string $w\in\Sigma^p$ having even parity such that
    \begin{equation}
      \langle x,y_1,z_1\rangle\in A_{w_1},\ldots,\langle x,y_p,z_p\rangle
      \in A_{w_p}.
    \end{equation}
    
  \item[$\bullet$]
    $B_1$ is the language of all pairs $\langle x,z_1\cdots z_p\rangle$, where
    $x\in\Sigma^{\ast}$ and $z_1,\ldots,z_p\in\Sigma^r$, for which there exists
    a string $w\in\Sigma^p$ having odd parity such that
    \begin{equation}
      \langle x,y_1,z_1\rangle\in A_{w_1},\ldots,\langle x,y_p,z_p\rangle
      \in A_{w_p}.
    \end{equation}
  \end{mylist}
  Given that $A_0$ and $A_1$ are disjoint and contained in $\C$, along with the
  fact that $\C$ is closed under joins and truth-table reductions, it follows
  that $B_0,B_1\in\C$.
  The lemma now follows from the observation that
  \begin{equation}
    g(x) = \Bigabs{\Bigl\{z \in \Sigma^s\,:\,\langle x,z\rangle\in B_0
      \Bigr\}}
    - \Bigabs{\Bigl\{z \in \Sigma^s\,:\,\langle x,z\rangle\in B_1\Bigr\}}
  \end{equation}
  for all $x\in\Sigma^{\ast}$, where $s= p \cdot r$.
\end{proof}

\begin{lemma}
  \label{lemma:Gap.C matrix multiplication}
  Let $\C$ be a nontrivial complexity class of languages that is closed under
  joins and polynomial-time truth table reductions, let
  $f_0,f_1\in\class{Gap}\cdot\C$, and let $p$ and $q$ be polynomially bounded
  functions.
  For every string $x\in\Sigma^{\ast}$ and $y\in\Sigma^q_1$, define the matrix
  $M_{x,y}$ as
  \begin{equation}
    \begin{aligned}
      \op{Re}\bigl(\bra{z}M_{x,y}\ket{w}\bigr) & = f_0(x,y,z,w),\\
      \op{Im}\bigl(\bra{z}M_{x,y}\ket{w}\bigr) & = f_1(x,y,z,w),
    \end{aligned}
  \end{equation}
  for all $z,w\in\Sigma^p$.
  There exist $\class{Gap}\cdot\C$ functions $g_0$ and $g_1$ satisfying
  \begin{equation}
    \label{eq:g_0 and g_1}
    \begin{aligned}
      \op{Re}\bigl(\bra{z}M_{x,y_1}\cdots M_{x,y_q}\ket{w}\bigr)
      & = g_0(x,z,w),\\
      \op{Im}\bigl(\bra{z}M_{x,y_1}\cdots M_{x,y_q}\ket{w}\bigr)
      & = g_1(x,z,w),
    \end{aligned}
  \end{equation}
  for all $x\in\Sigma^{\ast}$ and $z,w\in\Sigma^p$, where $y_1,\ldots,y_q$
  denote the elements of $\Sigma^q_1$ sorted in lexicographic order.
\end{lemma}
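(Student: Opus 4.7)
My plan is to expand the $(z,w)$-entry of the matrix product using the standard path-sum formula
\begin{equation*}
  \bra{z} M_{x,y_1} \cdots M_{x,y_q} \ket{w}
  = \sum_{z_1,\ldots,z_{q-1} \in \Sigma^p} \prod_{j=1}^{q} \bra{z_{j-1}} M_{x,y_j} \ket{z_j},
\end{equation*}
where $z_0 = z$ and $z_q = w$ are fixed, and then to write each factor $\bra{z_{j-1}} M_{x,y_j} \ket{z_j}$ as the Gaussian integer $f_0(x,y_j,z_{j-1},z_j) + i\, f_1(x,y_j,z_{j-1},z_j)$ and distribute. This turns the product of $q$ such complex numbers into a signed sum over $s \in \{0,1\}^q$ of monomials $i^{\abs{s}}\,\prod_{j=1}^q f_{s_j}(x,y_j,z_{j-1},z_j)$: the real part of the entry collects those with $\abs{s}$ even (with sign $(-1)^{\abs{s}/2}$), and the imaginary part collects those with $\abs{s}$ odd (with sign $(-1)^{(\abs{s}-1)/2}$).

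Next I would recognize each monomial $\prod_{j=1}^q f_{s_j}(x,y_j,z_{j-1},z_j)$ as a product of the type handled by Lemma~\ref{lemma:Gap.C closed under polynomial products}. To this end, introduce an auxiliary function $h$ whose extended first argument encodes the tuple $(x, s, z_0, \ldots, z_q)$ and whose second argument is $y \in \Sigma^q_1$. Given $y$, one recovers in polynomial time the unique index $j(y) \in \{1,\ldots,q\}$ at which the symbol $1$ appears, and then sets the output of $h$ to be $f_{s_{j(y)}}(x, y, z_{j(y)-1}, z_{j(y)})$. Since the case split on $s_{j(y)}$ is polynomial-time computable from $y$ and $s$, the two languages underlying $h$ can be built from those underlying $f_0$ and $f_1$ by a Karp reduction, so $h \in \class{Gap}\cdot\C$; Lemma~\ref{lemma:Gap.C closed under polynomial products} then produces a $\class{Gap}\cdot\C$ function of $(x, s, z_0, \ldots, z_q)$ equal to the monomial, since $\prod_{y \in \Sigma^q_1} h(\ldots,y) = \prod_{j=1}^q f_{s_j}(x,y_j,z_{j-1},z_j)$.

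To finish, I would sum the monomial functions over $s \in \{0,1\}^q$ and $z_1, \ldots, z_{q-1} \in \Sigma^p$, keeping $z_0 = z$ and $z_q = w$ fixed, collecting the $\abs{s}$-even terms into $g_0$ and the $\abs{s}$-odd terms into $g_1$. Lemma~\ref{lemma:Gap.C closed under exponential sums} handles these polynomial-length exponential sums, and the $\pm 1$ signs from $i^{\abs{s}}$ are absorbed by splitting each sum further according to the parity class of $\abs{s}$ modulo $4$ and subtracting: negation of a gap function amounts to swapping its two defining languages, and the sum of two $\class{Gap}\cdot\C$ functions is again $\class{Gap}\cdot\C$ (a degenerate case of Lemma~\ref{lemma:Gap.C closed under exponential sums}). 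What remains are $\class{Gap}\cdot\C$ functions $g_0(x,z,w)$ and $g_1(x,z,w)$ satisfying~\eqref{eq:g_0 and g_1}.

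The only real difficulty is careful bookkeeping, that is, defining $h$ and the various encodings so that the applications of Lemmas~\ref{lemma:Gap.C closed under polynomial products} and~\ref{lemma:Gap.C closed under exponential sums} compose cleanly, and so that the parity-based sign pattern of $i^{\abs{s}}$ is tracked without error. Conceptually the argument is the expected lift to arbitrary $\C$ (satisfying the hypotheses) of the familiar closure of $\class{GapP}$ under iterated matrix multiplication.
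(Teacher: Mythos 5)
Your outline is correct, but it handles the complex arithmetic differently from the paper. You expand each entry of the path sum as $f_0 + i f_1$, distribute to get a signed sum over $s\in\{0,1\}^q$ of monomials weighted by $i^{\abs{s}}$, and then track the sign pattern by splitting on $\abs{s}\bmod 4$. The paper instead embeds $\complex$ into $2\times 2$ real matrices: it builds an $h$ with
$h(x,y,0z,0w)=f_0$, $h(x,y,0z,1w)=f_1$, $h(x,y,1z,0w)=-f_1$, $h(x,y,1z,1w)=f_0$,
so that the block matrix $N_{x,y}=\bigl(\begin{smallmatrix}\op{Re}M_{x,y}&\op{Im}M_{x,y}\\-\op{Im}M_{x,y}&\op{Re}M_{x,y}\end{smallmatrix}\bigr)$ has entries given by a single $\class{Gap}\cdot\C$ function, and the real and imaginary parts of the iterated product appear as entries $\bra{0z}N_{x,y_1}\cdots N_{x,y_q}\ket{0w}$ and $\bra{0z}N_{x,y_1}\cdots N_{x,y_q}\ket{1w}$. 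The intermediate path indices are then $(p+1)$-bit strings, so Lemma~\ref{lemma:Gap.C closed under polynomial products} and Lemma~\ref{lemma:Gap.C closed under exponential sums} apply without any explicit parity or $i^{\abs{s}}$ bookkeeping. The two arguments are related by the change of variables $s_j = b_{j-1}\oplus b_j$, where $b_j$ is the leading bit of the $j$-th intermediate index in the paper's path sum; the real-part condition $b_q=0$ matches your ``$\abs{s}$ even,'' and the product of the $\pm 1$ signs in $h$ matches your $(-1)^{\abs{s}/2}$. What the paper's route buys is that the dispatching between $f_0$ and $\pm f_1$ happens through a \emph{single} $\class{Gap}\cdot\C$ function $h$ rather than a $y$-dependent case split, and the summation in the final step is a plain sum over all intermediate strings $u\in\Sigma^{(q-1)(p+1)}$ rather than a sum over a parity-restricted subset followed by a subtraction; this keeps the invocations of the two closure lemmas entirely syntactic. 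Your direct expansion is fine, but as you note yourself it shifts the burden onto careful sign and subset bookkeeping, each step of which invokes closure under Karp or truth-table reductions to restrict the summation domain and to dispatch on $s_{j(y)}$.
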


\begin{proof}
  By the assumptions on $\C$ stated in the lemma, there must exist a
  $\class{Gap}\cdot\C$ function $h$ satisfying
  \begin{equation}
    \begin{aligned}
      h(x,y,0z,0w) & = f_0(x,y,z,w),\\
      h(x,y,0z,1w) & = f_1(x,y,z,w),\\
      h(x,y,1z,0w) & = -f_1(x,y,z,w),\\
      h(x,y,1z,1w) & = f_0(x,y,z,w),\\
    \end{aligned}
  \end{equation}
  for all $x\in\Sigma^{\ast}$, $y\in\Sigma^q_1$, and $z,w\in\Sigma^p$.
  The matrix $N_{x,y}$ defined as
  \begin{equation}
    \bra{u} N_{x,y} \ket{v} = h(x,y,u,v)
  \end{equation}
  for all $x\in\Sigma^{\ast}$, $y\in\Sigma^q_1$ and $u,v\in\Sigma^{p+1}$ may be
  visualized as a $2\times 2$ block matrix:
  \begin{equation}
    N_{x,y} =
    \begin{pmatrix}
      \op{Re}(M_{x,y}) & \op{Im}(M_{x,y})\\[1mm]
      -\op{Im}(M_{x,y}) & \op{Re}(M_{x,y})
    \end{pmatrix}.
  \end{equation}
  We observe that
  \begin{equation}
    \label{eq:block products}
    N_{x,y_1} \cdots N_{x,y_q} =
    \begin{pmatrix}
      \op{Re}\bigl(M_{x,y_1} \cdots M_{x,y_q}\bigl) &
      \op{Im}\bigl(M_{x,y_1} \cdots M_{x,y_q}\bigl)\\[1mm]
      -\op{Im}\bigl(M_{x,y_1} \cdots M_{x,y_q}\bigl) &
      \op{Re}\bigl(M_{x,y_1} \cdots M_{x,y_q}\bigl)
    \end{pmatrix}.
  \end{equation}
  Given that $h$ is a $\class{Gap}\cdot\C$ function, there must exist a
  $\class{Gap}\cdot\C$ function $F$ for which
  \begin{equation}
    F(x,u_0\cdots u_q,y_k) = h(x,y_k,u_{k-1},u_k)
  \end{equation}
  for all $x\in\Sigma^{\ast}$, $u_0,\ldots,u_q\in\Sigma^{p+1}$, and
  $k\in\{1,\ldots,q\}$.

  Finally, define
  \begin{equation}
    G(x,u_0 \cdots u_q) = \prod_{y\in\Sigma^q_1} F(x,u_0\cdots u_q,y)
    = h(x,y_1,u_0,u_1) \cdots h(x,y_q,u_{q-1},u_q)    
  \end{equation}
  for all $x\in\Sigma^{\ast}$ and $u_0,\ldots,u_q\in\Sigma^{p+1}$,
  as well as
  \begin{equation}
    \begin{aligned}
      g_0(x,z,w) & = \sum_{u\in\Sigma^{(q-1)(p+1)}} G(x,0zu0w), \\
      g_1(x,z,w) & = \sum_{u\in\Sigma^{(q-1)(p+1)}} G(x,0zu1w),
    \end{aligned}
  \end{equation}
  for all $x\in\Sigma^{\ast}$ and $z,w\in\Sigma^p$.
  It follows by Lemmas~\ref{lemma:Gap.C closed under exponential sums} and
  \ref{lemma:Gap.C closed under polynomial products} that
  $g_0,g_1\in\class{Gap}\cdot\C$.

  Observing that $g_0$ and $g_1$ satisfy the equations \eqref{eq:g_0 and g_1},
  which is perhaps most evident from the equation \eqref{eq:block products},
  the proof of the lemma is complete.
\end{proof}

\pagebreak[3]

\subsection*{Quantum information and quantum circuits}

The notation we use when discussing quantum information is standard for
the subject, and we refer the reader to the books
\cite{NielsenC2000,KitaevSV2002,Wilde2017,Watrous2018} for further details.
A couple of points concerning quantum information notation and conventions that
may be helpful to some readers follow.

First, when we refer to a \emph{register} $\reg{X}$, we mean a collection of
qubits that we wish to view as a single entity, and we then use the same
letter $\X$ in a scripted font to denote the finite-dimensional complex Hilbert
space associated with $\reg{X}$ (i.e., the space of complex vectors having
entries indexed by binary strings of length equal to the number of qubits in
$\reg{X}$).
The set of density operators acting on such a space is denoted $\Density(\X)$.

Second, a \emph{channel} transforming a register $\reg{X}$ into a register
$\reg{Y}$ is a completely positive and trace-preserving linear map $\Phi$ that
transforms each density operator $\rho\in\Density(\X)$ into a density operator
$\Phi(\rho)\in\Density(\Y)$.
(More generally, such a mapping $\Phi$ transforms arbitrary linear
operators acting on $\X$ into linear operators acting on $\Y$.)
The \emph{adjoint} of such a channel $\Phi$ is the uniquely determined linear
map $\Phi^{\ast}$ transforming linear operators acting on~$\Y$ into linear
operators acting on $\X$ that satisfies the equation
\begin{equation}
  \tr\bigl(P \tsp \Phi(\rho)\bigr) =
  \tr\bigl(\Phi^{\ast}(P)\tsp \rho\bigr)
\end{equation}
for all density operators $\rho\in\Density(\X)$ and all positive semidefinite
operators $P$ acting on~$\Y$.
The adjoint $\Phi^{\ast}$ of a channel $\Phi$ is not necessarily itself a
channel, but rather is a completely positive and \emph{unital} linear map,
which means that $\Phi^{\ast}(\I_{\Y}) = \I_{\X}$ (for $\I_{\X}$ and $\I_{\Y}$
denoting the identity operators acting on $\X$ and $\Y$, respectively).
Intuitively speaking, if $P$ is a measurement operator in the equation above,
one can think of $\Phi^{\ast}$ as transforming the measurement operator
$P$ into a new measurement operator $\Phi^{\ast}(P)$, with the probability of
this outcome for the state $\rho$ being the same as if one first applied $\Phi$
to $\rho$ and then measured with respect to $P$.

Now we will move on to \emph{quantum circuits}, which are acyclic networks of
quantum gates connected by qubit wires.
We choose to use the standard, general model of quantum information based on
density operators and quantum channels, as opposed to the restricted model of
pure state vectors and unitary operations, when discussing quantum circuits.
In this general model, each gate represents a quantum channel acting on a
constant number of qubits---including nonunitary gates, such as gates that
introduce fresh initialized qubits or gates that discard qubits.
Through this model, ordinary classical circuits, as well as classical circuits
that introduce randomness into computations, can be viewed as special cases of
quantum circuits.
One may also represent measurements directly as quantum gates or circuits.

It is well-known that this general model is equivalent to the purely unitary
model, as is explained in \cite{AharonovKN1998} and \cite{Watrous2009}, for
instance.
The main benefits of using the general model in the context of this paper
are that (i) it allows us to avoid having to constantly distinguish between
input qubits and ancillary qubits, or output qubits and garbage qubits, and
(ii)~it has the minor but nevertheless positive side effect of eliminating the
appearance of the irrational number $1/\sqrt{2}$ in many of the formulas that
will appear.

We choose a universal gate set from which all quantum circuits are assumed to
be composed.
The gates in this set include
\emph{Hadamard}, \emph{Toffoli}, and \emph{phase-shift} gates
(which induce the single-qubit unitary transformation determined by the
actions \mbox{$\ket{0} \mapsto \ket{0}$}
and $\ket{1} \mapsto i\ket{1}$),
as well as \emph{ancillary gates} and \emph{erasure gates}.
Ancillary gates take no input qubits and output a single qubit in the $\ket{0}$
state, while erasure gates take one input qubit and produce no output qubits,
and are described by the partial trace.
Any other choice for the unitary gates that is universal for quantum computing
could be taken instead, but the gate set just specified is both simple and
convenient.

The \emph{size} of a quantum circuit is defined to be the number of gates in
the circuit plus the total number of input and output qubits.
Thus, if a quantum circuit were to be represented in a standard way as a
directed acyclic graph, its size would simply be the number of vertices,
including a vertex for each input and output qubit, of the corresponding
graph.

A collection $\{Q_{x} : x \in \Sigma^{∗}\}$ of quantum circuits is said to be
\emph{polynomial-time generated} if there exists a polynomial-time
deterministic Turing machine that, on input $x \in \Sigma^{\ast}$, outputs an
encoding of the circuit $Q_{x}$.
When such a family is parameterized by tuples of strings, it is to be understood
that we are implicitly referring to one of the tuple-functions discussed
previously.
We will not have any need to discuss the specifics of the encoding scheme that
we use, but naturally it is assumed to be efficient, with the size of a circuit
and its encoding length being polynomially related.

The following lemma relates the complexity of computing circuit transition
amplitudes to $\class{GapP}$ functions.
The essential idea it expresses is due to Fortnow and Rogers
\cite{FortnowR1999}, who proved a variant of it for unitary computations by
quantum Turing machines.
While a result along the lines of the lemma that follows is suggested in
the survey paper \cite{Watrous2009}, that paper does not include a proof, and
so we include one below.

\begin{lemma}
  \label{lemma:circuit-GapP}
  Let $\{Q_x\,:\,x\in\Sigma^{\ast}\}$ be a polynomial-time generated family
  of quantum circuits, where each circuit $Q_x$ takes $n$ input qubits and
  outputs $k$ qubits, for polynomially bounded functions $n$ and $k$.
  There exists a polynomially bounded function $r$ and $\class{GapP}$ functions
  $f$ and $g$ such that
  \begin{equation}
    \label{eq:circuits entries GapP}
    \begin{aligned}
      \op{Re}\bigl(\bra{u} Q_x \bigl( \ket{z}\bra{w} \bigr) \ket{v}\bigr)
      & = 2^{-r} f_0(x,z,w,u,v),\\
      \op{Im}\bigl(\bra{u} Q_x \bigl( \ket{z}\bra{w} \bigr) \ket{v}\bigr)
      & = 2^{-r} f_1(x,z,w,u,v),
    \end{aligned}
  \end{equation}
  for all $x\in\Sigma^{\ast}$, $z,w\in\Sigma^n$, and $u,v\in\Sigma^k$.
\end{lemma}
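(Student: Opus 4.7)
The overall strategy, following the approach of Fortnow and Rogers \cite{FortnowR1999}, is to express each entry of the output operator as a Feynman sum over computation paths, and then identify its real and imaginary parts as differences of counts of polynomial-time checkable configurations.

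First I would reduce the channel $Q_x$ to a unitary computation followed by a partial trace. Among the gates in the chosen universal set, only the ancillary gates (appending $\ket{0}$) and the erasure gates (partial trace) fail to be unitary. By pushing all ancillary gates to the beginning of the circuit and deferring all erasures to the end (both of which commute past unitary gates acting on other wires), the channel $Q_x$ may be rewritten as
\[
  Q_x(\rho) \;=\; \tr_{\reg{G}}\bigl(U_x\,(\rho \otimes \ket{0^a}\bra{0^a})\, U_x^{\ast}\bigr),
\]
where $U_x$ is a unitary circuit composed only of Hadamard, Toffoli, and phase-shift gates, $a$ is the total number of ancillas, and $\reg{G}$ is a garbage register of $g$ qubits with $n+a=k+g$. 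Consequently,
\[
  \bra{u}\,Q_x(\ket{z}\bra{w})\,\ket{v} \;=\; \sum_{y\in\Sigma^g}\,\bra{u,y}\,U_x\,\ket{z,0^a}\;\overline{\bra{v,y}\,U_x\,\ket{w,0^a}}.
\]

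Next I would apply the Feynman path expansion to each amplitude. Writing $U_x = G_d \cdots G_1$ acting on $m = n+a$ qubits (with each $G_i$ extended by identity on the unaffected wires), one has $\bra{a}U_x\ket{b} = \sum \prod_{i=1}^d \bra{y_i}G_i\ket{y_{i-1}}$, where the sum ranges over sequences of intermediate basis states $y_1,\ldots,y_{d-1}\in\Sigma^m$ with $y_0=b$ and $y_d=a$. Each single-gate factor either vanishes (when the non-affected bits of $y_{i-1}$ and $y_i$ disagree) or lies in $\{\pm 1/\sqrt{2}\}$ (Hadamard) or $\{0,1,i\}$ (Toffoli and phase-shift). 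If $h(x)$ denotes the number of Hadamard gates in $U_x$, then every nonzero path $P$ contributes $2^{-h(x)/2}\,\eta(P)$ for some $\eta(P)\in\{\pm 1,\pm i\}$ computable from $P$ in polynomial time. Multiplying a $U_x$ amplitude by a conjugate $U_x^{\ast}$ amplitude and summing over the garbage string $y$, every surviving triple $(y,P_1,P_2)$ contributes exactly $2^{-h(x)}\,\eta$ for some $\eta \in \{\pm 1,\pm i\}$.

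To make the exponent depend only on $\abs{x}$, I would pad $U_x$ by inserting pairs of Hadamard gates on a fresh ancilla qubit (whose net action is the identity) until its Hadamard count equals a polynomially bounded function $r(\abs{x})$ that upper-bounds the total circuit size. Then I would define four sets $C_{+1}, C_{-1}, C_{+i}, C_{-i}$ consisting of string encodings $c$ of triples $(y,P_1,P_2)$ whose contribution to the sum equals $+2^{-r}$, $-2^{-r}$, $+2^{-r}i$, and $-2^{-r}i$, respectively. Each such encoding has polynomial length, and membership in each class is decidable in deterministic polynomial time from $x$ and $c$, since the gate sequence of the padded $U_x$ can be recovered from the polynomial-time generator of the family. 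Setting
\[
  f_0(x,z,w,u,v) = \abs{C_{+1}} - \abs{C_{-1}}, \qquad f_1(x,z,w,u,v) = \abs{C_{+i}} - \abs{C_{-i}},
\]
presents $f_0$ and $f_1$ as $\class{GapP}$ functions satisfying the equations \eqref{eq:circuits entries GapP} by construction. The main obstacle is purely bookkeeping: fixing a concrete polynomial-length encoding for a configuration $(y,P_1,P_2)$, verifying that the gate-consistency conditions and boundary conditions on $y_0$ and $y_d$ are checkable in polynomial time from $x$, and carefully tracking how the sign and phase factors from the two conjugate amplitudes combine to place each triple in the correct class $C_{\pm 1}$ or $C_{\pm i}$. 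Once this is fixed, the lemma follows directly from the definition of $\class{GapP}$.
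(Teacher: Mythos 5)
Your argument is correct, but it takes a genuinely different route from the paper's. You reduce $Q_x$ to Stinespring form (ancillas pushed to the front, erasures deferred to the end), expand each unitary amplitude as a Feynman sum over computation paths, and count paths by their phase class---essentially the original Fortnow--Rogers strategy. The paper instead works with the \emph{natural representation} $K(Q_x) = K(Q_{x,r})\cdots K(Q_{x,1})$, observing that every gate in the chosen set---including the nonunitary ancillary and erasure gates---has a natural representation whose entries are half-(Gaussian-)integers, and then invokes its Lemma~\ref{lemma:Gap.C matrix multiplication} on closure of $\class{Gap}\cdot\C$ under iterated matrix products. The paper's route buys two things: it never has to rearrange the circuit (nonunitary gates are handled on the same footing as unitary ones), and it eliminates $1/\sqrt{2}$ from the outset, whereas your path amplitudes carry $2^{-h(x)/2}$ factors that only become rational after pairing each path with its conjugate partner. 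Your route is more elementary and self-contained, at the cost of the rearrangement step and the bookkeeping you acknowledge. One small wrinkle in your normalization: padding with \emph{pairs} of Hadamards changes the Hadamard count in steps of two, so you cannot always hit a prescribed $r(\abs{x})$ exactly; this is trivially repaired by instead taking $r$ to be any polynomial upper bound on $h(x)$ and absorbing the integer factor $2^{\tsp r-h(x)}$ into the counting (e.g., by appending $r - h(x)$ free witness bits), or by fixing the parity of $r$ appropriately. With that adjustment the proof goes through.
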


\begin{proof}
  Consider first an arbitrary channel $\Phi$ that maps $n$-qubit density
  operators to $k$-qubit density operators.
  The action of $\Phi$ on density operators is linear, and can therefore be
  represented through matrix multiplication.
  One concrete way to do this is to use the so-called natural representation
  (also known as the linear representation) of quantum channels.

  A description of the natural representation of a quantum channel begins with
  the\linebreak[3]
  \emph{vectorization} mapping: assuming $M$ is a matrix whose rows and
  columns are indexed by strings of some length $m$, the corresponding vector
  $\op{vec}(M)$ is indexed by strings of length $2m$ according to the following
  definition:
  \begin{equation}
    \op{vec}(M) = \sum_{y,z\in\Sigma^m} \bra{y} M \ket{z} \, \ket{yz}.
  \end{equation}
  In words, the vectorization map reshapes a matrix into a vector by
  transposing the rows of the matrix into column vectors and stacking them on
  top of one another.

  With respect to the vectorization mapping, the action of the channel $\Phi$ is
  described by its \emph{natural representation} $K(\Phi)$, which is a linear
  mapping that acts as
  \begin{equation}
    K(\Phi) \op{vec}(\rho) = \op{vec}(\Phi(\rho))
  \end{equation}
  for every $n$-qubit density operator $\rho$.
  As a matrix, $K(\Phi)$ has columns indexed by strings of length $2n$ and rows
  indexed by strings of length $2k$.
  Its entries are described explicitly by the equation
  \begin{equation}
    \bra{uv} K(\Phi) \ket{zw} = \bra{u} \Phi(\ket{z}\bra{w}) \ket{v}
  \end{equation}
  holding for every $z,w\in\Sigma^n$ and $u,v\in\Sigma^k$.
  The equations \eqref{eq:circuits entries GapP} may therefore be equivalently
  written as
  \begin{equation}
    \label{eq:representations entries GapP}
    \begin{aligned}
      \op{Re}\bigl(\bra{uv} K(Q_x)\ket{zw}\bigr)
      & = 2^{-r} f_0(x,z,w,u,v),\\
      \op{Im}\bigl(\bra{uv} K(Q_x)\ket{zw}\bigr)
      & = 2^{-r} f_1(x,z,w,u,v).
    \end{aligned}
  \end{equation}
  It must be observed that the natural representation is multiplicative,
  in the sense that channel composition corresponds to matrix multiplication:
  $K(\Phi \Psi) = K(\Phi) K(\Psi)$ for all channels $\Phi$ and $\Psi$
  for which the composition $\Phi\Psi$ makes sense.
  It is also helpful to note that a channel $\Phi(\rho) = U \rho U^{\ast}$
  corresponding to a unitary operation has as its natural representation the
  operator
  \begin{equation}
    K(\Phi) = U \otimes \overline{U}.
  \end{equation}

  Now let us turn to the family $\{Q_x\,:\,x\in\Sigma^{\ast}\}$.
  Because this family is polynomial-time generated, there must exist a
  polynomially bounded function $r$ for which
  $\op{size}(Q_x) \leq r$ for all $x\in\Sigma^{\ast}$.
  We may therefore write
  \begin{equation}
    Q_x = Q_{x,r} \cdots Q_{x,1}
  \end{equation}
  for $Q_{x,1},\ldots,Q_{x,r}$ being either identity channels or channels that
  describe the action of a single gate of $Q_x$ tensored with the identity
  channel on all of the qubits besides the inputs of the corresponding gate
  that exist at the moment that the gate is applied.
  We also observe that the number of input qubits and output qubits of each
  $Q_{x,k}$ must be bounded by $r$.

  Given that
  \begin{equation}
    K(Q_x) = K(Q_{x,r}) \cdots K(Q_{x,1}),
  \end{equation}
  we are led to consider the natural representation of each channel $Q_{x,k}$.
  It will be convenient to identify each operator $K(Q_{x,k})$ with the matrix
  indexed by strings of length $2r$, as opposed to being indexed by strings
  whose lengths depend on the number of qubits in existence before and after
  $Q_{x,k}$ is applied, simply by padding $K(Q_{x,k})$ with rows and columns of
  zero entries.
  
  The natural representations of the individual gates in the universal gate set
  we have selected are as follows:
  \begin{mylist}{8mm}
  \item[1.] Hadamard gate:
    \begin{equation}
      \frac{1}{2}
      \begin{pmatrix}
        1 & 1 & 1 & 1\\
        1 & -1 & 1 & -1\\
        1 & 1 & -1 & -1\\
        1 & -1 & -1 & 1
      \end{pmatrix}
    \end{equation}
  \item[2.] Phase gate:
    \begin{equation}
      \begin{pmatrix}
        1 & 0 & 0 & 0\\
        0 & -i & 0 & 0\\
        0 & 0 & i & 0\\
        0 & 0 & 0 & 1
      \end{pmatrix}
    \end{equation}
  \item[3.]
    Toffoli gate:
    \begin{equation}
      \begin{pmatrix}
        1 & 0 & 0 & 0 & 0 & 0 & 0 & 0\\
        0 & 1 & 0 & 0 & 0 & 0 & 0 & 0\\
        0 & 0 & 1 & 0 & 0 & 0 & 0 & 0\\
        0 & 0 & 0 & 1 & 0 & 0 & 0 & 0\\
        0 & 0 & 0 & 0 & 1 & 0 & 0 & 0\\
        0 & 0 & 0 & 0 & 0 & 1 & 0 & 0\\
        0 & 0 & 0 & 0 & 0 & 0 & 0 & 1\\
        0 & 0 & 0 & 0 & 0 & 0 & 1 & 0
      \end{pmatrix}
      \otimes
      \begin{pmatrix}
        1 & 0 & 0 & 0 & 0 & 0 & 0 & 0\\
        0 & 1 & 0 & 0 & 0 & 0 & 0 & 0\\
        0 & 0 & 1 & 0 & 0 & 0 & 0 & 0\\
        0 & 0 & 0 & 1 & 0 & 0 & 0 & 0\\
        0 & 0 & 0 & 0 & 1 & 0 & 0 & 0\\
        0 & 0 & 0 & 0 & 0 & 1 & 0 & 0\\
        0 & 0 & 0 & 0 & 0 & 0 & 0 & 1\\
        0 & 0 & 0 & 0 & 0 & 0 & 1 & 0
      \end{pmatrix}
    \end{equation}
  \item[4.]
    Ancillary qubit gate:
    \begin{equation}
      \begin{pmatrix}
        1 \\ 0 \\ 0 \\ 0
      \end{pmatrix}
    \end{equation}
  \item[5.]
    Erasure gate:
    \begin{equation}
      \begin{pmatrix}
        1 & 0 & 0 & 1
      \end{pmatrix}
    \end{equation}
  \end{mylist}
  Based on these representations, it is straightforward to define
  $\class{GapP}$ functions (or, in fact, $\class{FP}$ functions) $g_0$ and
  $g_1$ such that
  \begin{equation}
    \begin{aligned}
      \op{Re}\bigl(\bra{uv} K(Q_{x,k})\ket{zw}\bigr)
      & = \frac{1}{2} \, g_0(x,z,w,u,v,y_k),\\[1mm]
      \op{Im}\bigl(\bra{uv} K(Q_{x,k})\ket{zw}\bigr)
      & = \frac{1}{2} \, g_1(x,z,w,u,v,y_k),
    \end{aligned}
  \end{equation}
  for all $x\in\Sigma^{\ast}$, $k\in\{1,\ldots,r\}$, and $u,v,z,w\in\Sigma^r$,
  where we write $y_1,\ldots,y_r$ to denote the elements of $\Sigma^r_1$ sorted
  in lexicographic order.
  It now follows through a straightforward application of
  Lemma~\ref{lemma:Gap.C matrix multiplication} there must exist
  $\class{GapP}$ functions $f_0$ and $f_1$ satisfying
  \eqref{eq:circuits entries GapP} and therefore 
  \eqref{eq:representations entries GapP}, for all $x\in\Sigma^{\ast}$,
  $z,w\in\Sigma^n$, and $u,v\in\Sigma^k$, as required.
\end{proof} 

\subsection*{A tail bound for operator-valued random variables}
\label{sec:tail-bound}

We will make use of the following tail bound on the minimum eigenvalue of
the average of a collection of operator-valued random variables.
This bound follows from a more general result due to Tropp.
In particular, the bound stated in the theorem below follows from Theorem 5.1
of \cite{Tropp2012} together with Pinsker's inequality, which relates the
relative entropy of two distributions to their total variation distance.

\begin{theorem}[Tropp]
  \label{theorem:eigenvalue-tail-bound}
  Let $d$ and $N$ be positive integers, let $\eta\in [0,1]$ and $\varepsilon>0$
  be real numbers, and let $X_1,\ldots,X_N$ be independent and identically
  distributed operator-valued random variables having the following
  properties:
  \begin{mylist}{8mm}
  \item[1.]
    Each $X_k$ takes $d\times d$ positive semidefinite operator values
    satisfying $X_k \leq \I$.
  \item[2.]
    The minimum eigenvalue of the expected operator $\op{E}(X_k)$ satisfies
    $\lambda_{\textup{min}}(\op{E}(X_k)) \geq \eta$.
  \end{mylist}
  It is the case that
  \begin{equation}
    \op{Pr}\biggl(\lambda_{\textup{min}}\biggl(\frac{X_1 + \cdots + X_N}{N}
    \biggr) < \eta - \varepsilon\biggr) \leq d \exp(-2N\varepsilon^2).
  \end{equation}
\end{theorem}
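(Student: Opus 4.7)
The plan is to apply Tropp's matrix Chernoff bound (Theorem~5.1 of \cite{Tropp2012}) and then invoke Pinsker's inequality to convert a relative-entropy exponent into a quadratic one.

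First I would verify the hypotheses of Tropp's theorem: each $X_k$ is a $d \times d$ positive semidefinite operator satisfying $X_k \leq \I$ almost surely, so in Tropp's notation the upper-bound parameter is $R = 1$. By the i.i.d.\ assumption,
\[
\mu_{\min} \coloneqq \lambda_{\min}\Bigl(\sum_{k=1}^N \op{E}(X_k)\Bigr) = N \cdot \lambda_{\min}(\op{E}(X_1)) \geq N\eta.
\]
Applying Tropp's bound and rewriting the resulting exponent in terms of the Bernoulli relative entropy
\[
D(a \,\|\, b) = a\ln(a/b) + (1-a)\ln((1-a)/(1-b)),
\]
I would obtain, for every $\alpha \leq \mu_{\min}/N$,
\[
\op{Pr}\bigl(\lambda_{\min}(\bar X) \leq \alpha\bigr) \leq d \exp\bigl(-N \cdot D(\alpha \,\|\, \mu_{\min}/N)\bigr),
\]
where $\bar X = (X_1 + \cdots + X_N)/N$.

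Next I would specialize to $\alpha = \eta - \varepsilon$. Since $p \mapsto D(\alpha \,\|\, p)$ is non-decreasing on $[\alpha, 1]$ (a one-line calculation from the partial derivative $-\alpha/p + (1-\alpha)/(1-p)$), replacing $\mu_{\min}/N$ with the smaller value $\eta$ only weakens the bound, so
\[
\op{Pr}\bigl(\lambda_{\min}(\bar X) \leq \eta - \varepsilon\bigr) \leq d \exp\bigl(-N \cdot D(\eta - \varepsilon \,\|\, \eta)\bigr).
\]
Pinsker's inequality for Bernoulli distributions then supplies $D(\eta - \varepsilon \,\|\, \eta) \geq 2\varepsilon^2$, and the result follows.

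The main obstacle will be the rewriting of Tropp's exponent into the Bernoulli relative-entropy form above. Theorem~5.1 of \cite{Tropp2012} is typically stated in the multiplicative Chernoff form $d\bigl[e^{-\delta}/(1-\delta)^{1-\delta}\bigr]^{\mu_{\min}/R}$, and identifying (or suitably bounding) this expression by the relative-entropy expression requires a careful algebraic manipulation that exploits both the lower bound $X_k \succeq 0$ and the upper bound $X_k \preceq \I$. Once this identification is in hand, the monotonicity-in-$p$ step and the Pinsker step are both routine.
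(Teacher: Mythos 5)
The overall route you take---Tropp plus Pinsker---matches the paper, which gives no proof of this theorem and simply cites Theorem~5.1 of \cite{Tropp2012} together with Pinsker's inequality. However, the step you yourself flag as ``the main obstacle'' conceals a genuine gap: the exponent in the \emph{statement} of Tropp's Theorem~5.1 is strictly weaker than the Bernoulli relative entropy, so no algebraic manipulation can identify the two, and no inequality runs in the direction you need. Concretely, with $R=1$ and $\bar\mu := \mu_{\min}/N$, taking logarithms of $d\bigl[e^{-\delta}/(1-\delta)^{1-\delta}\bigr]^{\mu_{\min}}$ after substituting $1-\delta = \alpha/\bar\mu$ yields the per-sample exponent
\[
T(\alpha,\bar\mu) := (\bar\mu - \alpha) + \alpha\ln(\alpha/\bar\mu),
\]
and one checks that
\[
D(\alpha \,\|\, \bar\mu) - T(\alpha,\bar\mu) = (1-\alpha)\ln\frac{1-\alpha}{1-\bar\mu} - (\bar\mu - \alpha) > 0
\]
for $0 \le \alpha < \bar\mu < 1$ (the right side vanishes at $\bar\mu = \alpha$ and has positive $\bar\mu$-derivative $(\bar\mu-\alpha)/(1-\bar\mu)$). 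Worse, $T \ge 2\varepsilon^2$ can fail outright: taking $\eta = \bar\mu = \tfrac{1}{2}$ and $\varepsilon = \tfrac{1}{10}$ gives $T \approx 0.0107 < 0.02 = 2\varepsilon^2$. So a literal application of the stated Theorem~5.1 followed by Pinsker does not yield $d\exp(-2N\varepsilon^2)$.

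The repair is to use the \emph{proof} of Tropp's matrix Chernoff bound rather than the weakened multiplicative form in which the theorem is stated. The matrix Laplace transform master bound together with Lieb's concavity theorem and the operator MGF comparison $\mathbb{E}\bigl[e^{\theta X}\bigr] \preceq \I + (e^\theta - 1)\,\mathbb{E}[X]$, valid for $0 \preceq X \preceq \I$, give---before any weakening---
\[
\Pr\bigl(\lambda_{\min}(\bar X) \le \alpha\bigr) \le d\,\inf_{\phi > 0}\Bigl[e^{\phi\alpha}\bigl(1 - \bar\mu + \bar\mu e^{-\phi}\bigr)\Bigr]^N = d\exp\bigl(-N\, D(\alpha \,\|\, \bar\mu)\bigr),
\]
the infimum being the standard scalar Chernoff optimization. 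From that point your monotonicity observation ($D(\alpha\,\|\,\cdot)$ increasing on $[\alpha,1]$, so $\bar\mu$ may be replaced by the smaller $\eta$) and the Pinsker step $D(\eta-\varepsilon \,\|\, \eta) \ge 2\varepsilon^2$ are both correct and finish the proof, as is your verification of Tropp's hypotheses. In short: your plan is right in outline and in its final two steps, but the middle step as written would stall, because it treats as an identity something that is in fact a strict inequality in the wrong direction.
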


\section{Complexity classes for one-turn quantum refereed games}

In this section we define the complexity classes to be considered in this
paper: $\class{QRG}(1)$, $\class{CQRG}(1)$, and $\class{MQRG}(1)$.
The definitions of these classes all refer to the notion of a \emph{referee},
which (in this paper) is a polynomial-time generated family
\begin{equation}
  R = \{R_x\,:\,x\in\Sigma^{\ast}\}
\end{equation}
of quantum circuits having the following special form.
\begin{mylist}{8mm}
\item[1.]
  For each $x\in\Sigma^{\ast}$, the inputs to the circuit $R_x$ are
  grouped into two registers: an $n$ qubit register $\reg{A}$ and an $m$-qubit
  register $\reg{B}$, for polynomially bounded functions $n$ and $m$.
\item[2.]
  The output of each circuit $R_x$ is a single qubit, which is to be measured
  with respect to the standard basis immediately after the circuit is run.
\end{mylist}
Given that classical probabilistic states may be viewed as special cases of
quantum states (corresponding to diagonal density operators), this definition
of a referee can still be used in the situation in which either or both
of the registers $\reg{A}$ and $\reg{B}$ is constrained to initially store
a classical state.

We are interested in the situation that, for a given choice of an input string
$x\in\Sigma^{\ast}$, the input to the circuit $R_x$ is a product state of the
form $\rho\otimes\sigma$, where $\rho \in \Density(\A)$ is a state of the
register $\reg{A}$ and $\sigma\in\Density(\B)$ is a state of the register
$\reg{B}$.
The state $\rho \in \Density(\A)$ is to be viewed as representing the state
that Alice plays, while $\sigma \in \Density(\B)$ represents the state Bob
plays.
When the single output qubit of the circuit $R_x$ is measured with respect to
the standard basis, the outcome~1 is interpreted as ``Alice wins,'' while the
outcome 0 is interpreted as ``Bob wins.''

Now, consider the quantity defined as
\begin{equation}
  \omega(R_x) = \max_{\rho\in\Density(\A)} \min_{\sigma\in\Density(\B)}
  \bra{1} R_x (\rho \otimes \sigma) \ket{1}.
\end{equation}
Given that $\Density(\A)$ and $\Density(\B)$ are compact and convex sets, and
the value $\bra{1} R_x (\rho \otimes \sigma) \ket{1}$ is bilinear in
$\rho$ and $\sigma$, Sion's min-max theorem implies that changing the order of
the minimum and maximum does not change the value of the expression.
That is, this quantity may alternatively be written
\begin{equation}
  \omega(R_x) = \min_{\sigma\in\Density(\B)} \max_{\rho\in\Density(\A)} 
  \bra{1} R_x (\rho \otimes \sigma) \ket{1}.
\end{equation}
\pagebreak[3]

\noindent
This value represents the probability that Alice wins the game defined by the
circuit $R_x$, assuming both Alice and Bob play optimally.
With that definitions in hand, we may now define the complexity class
$\class{QRG}(1)$, which is short for \emph{one-turn quantum refereed games}.

\begin{definition}
  \label{definition:QRG(1)}
  A promise problem $A = (A_{\text{yes}},A_{\text{no}})$ is contained in the
  complexity class $\class{QRG}(1)_{\alpha,\beta}$ if there exists a referee
  $R = \{R_x\,:\,x\in\Sigma^{\ast}\}$ such that the following properties are
  satisfied:
  \begin{mylist}{8mm}
  \item[1.]
    For every string $x\in A_{\text{yes}}$, it is the case that
    $\omega(R_x) \geq \alpha$.
  \item[2.]
    For every string $x\in A_{\text{no}}$, it is the case that
    $\omega(R_x) \leq \beta$.
  \end{mylist}
  We also define $\class{QRG}(1) = \class{QRG}(1)_{2/3,1/3}$.
\end{definition}

\noindent
In this definition, $\alpha$ and $\beta$ may be constants, or they may be
functions of the length of the input $x$.
A short summary of known facts and observations concerning the
complexity class $\class{QRG}(1)$ follows.
\begin{mylist}{8mm}
\item[$\bullet$]
  $\class{QMA}\subseteq\class{QRG}(1)$.
  This is because the referee's measurement may simply ignore Bob's state
  $\sigma$ and treat Alice's state $\rho$ as a quantum proof in a QMA proof
  system.
  
\item[$\bullet$]
  $\class{QRG}(1)$ is closed under complementation:
  $\class{QRG}(1) = \text{co-}\class{QRG}(1)$.
  For a promise problem
  $(A_{\text{yes}},A_{\text{no}})\in\class{QRG}(1)$,
  one may obtain a one-turn quantum refereed game for
  $(A_{\text{no}},A_{\text{yes}})$ by simply exchanging the roles of Alice and
  Bob.
  
\item[$\bullet$]
  It is the case that $\class{QRG}(1)=\class{QRG}(1)_{\alpha,\beta}$ for a wide
  range of choices of $\alpha$ and $\beta$, similar to error bounds for
  $\class{BPP}$, $\class{BQP}$, and $\class{QMA}$.
  In particular, $\class{QRG}(1)=\class{QRG}(1)_{\alpha,\beta}$
  provided that $\alpha$ and $\beta$ are polynomial-time computable and
  satisfy
  \begin{equation}
    \alpha \leq 1 - 2^{-p},\quad
    \beta \geq 2^{-p},
    \quad \text{and} \quad
    \alpha - \beta \geq \frac{1}{p}
  \end{equation}
  for some choice of a strictly positive polynomially bounded function
  $p$.\footnote{Error reduction may be performed through parallel repetition
    followed by majority vote.
    An analysis of this method for $\class{QRG(1)}$ requires that one
    considers the possibility that the dishonest player (meaning the one that
    should not have a strategy that wins with high probability) entangles his
    or her state across the different repetitions, with the claimed bounds
    following from a similar analysis to parallel repetition followed by
    majority vote for $\class{QMA}$ \cite{KitaevSV2002}.
    We note that there is no ``in place'' error reduction method known for
    $\class{QRG(1)}$ that is analogous to the technique of
    \cite{MarriottW2005} for $\class{QMA}$.}
  
\item[$\bullet$]
  $\class{QRG}(1) \subseteq \class{PSPACE}$ \cite{JainW2009}.
\end{mylist}

The question that originally motivated the work reported in this paper is
whether the containment $\class{QRG}(1) \subseteq \class{PSPACE}$ can be
improved.
We do not succeed in improving this containment, but we are able to prove
stronger bounds on two interesting restricted variants of $\class{QRG}(1)$,
which we now define.

The first variant of $\class{QRG}(1)$ we define is one in which Alice's state
is restricted to be a classical state.
We will call this class $\class{CQRG}(1)$.

\begin{definition}
  \label{definition:CQRG(1)}
  A promise problem $A = (A_{\text{yes}},A_{\text{no}})$ is contained in the
  complexity class $\class{CQRG}(1)_{\alpha,\beta}$ if there exists a referee
  $R = \{R_x\,:\,x\in\Sigma^{\ast}\}$ such that the following properties are
  satisfied:
  \begin{mylist}{8mm}
  \item[1.]
    For every string $x\in\Sigma^{\ast}$, the circuit $R_x$ takes the form
    illustrated in Figure~\ref{fig:CQRG-referee}.
    That is, $R_{x}$ takes an $n$-qubit register $\reg{A}$ and an $m$-qubit
    register $\reg{B}$ as input, measures each qubit of $\reg{A}$ with respect
    to the standard basis, leaving it in a classical state, and then runs the
    circuit $Q_x$ on the pair $(\reg{A},\reg{B})$, producing a single output
    qubit.
  \item[2.]
    For every string $x\in A_{\text{yes}}$, it is the case that
    $\omega(R_x) \geq \alpha$.
  \item[3.]
    For every string $x\in A_{\text{no}}$, it is the case that
    $\omega(R_x) \leq \beta$.
  \end{mylist}
  We also define $\class{CQRG}(1) = \class{CQRG}(1)_{2/3,1/3}$.
\end{definition}

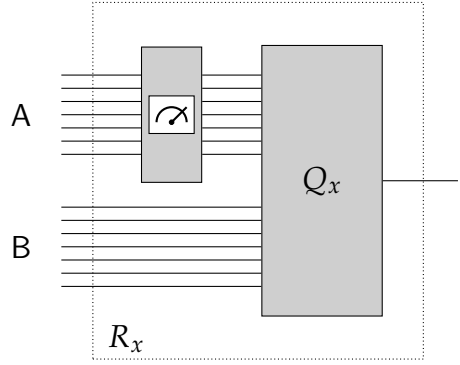
\begin{figure}[t]
  \begin{center}
    \begin{tikzpicture}[scale=1,
        circuit/.style={draw, minimum height=36mm, minimum width=16mm,
          fill = ChannelColor, text=ChannelTextColor},
        smallcircuit/.style={draw, minimum height=18mm, minimum width=8mm,
          fill = ChannelColor, text=ChannelTextColor},
        measure/.style={draw, minimum width=7mm, minimum height=7mm,
          fill = ChannelColor},
        >=latex]
      
      \node[draw, densely dotted, minimum width=125, minimum height=135]
      at (-0.85,0) {};
      
      \node at (-2.6,-2.1) {$R_x$};
      \node (R) at (0,0) [circuit] {$Q_x$};
      \node (Delta) at (-2,0) [smallcircuit,yshift=25] {};

      \node[draw, minimum width=6mm, minimum height=5mm, fill=ReadoutColor]
      (readout) at (Delta) {};
      \draw[thick] ($(Delta)+(0.2,-0.1)$) arc (0:180:2mm);
      \draw[thick] ($(Delta)+(0.2,0.1)$) -- ($(Delta)+(0,-0.1)$);
      \draw[fill] ($(Delta)+(0,-0.1)$) circle (0.3mm);
      
      \draw (R.east) -- ([xshift=10mm]R.east);

      \node[minimum width=30] (Alice) at (-4,0) [yshift=25] {$\reg{A}$};
      \node[minimum width=30] (Bob) at (-4,0) [yshift=-25] {$\reg{B}$};
      
      \foreach \y in {-15,-10,-5,0,5,10,15} {
        \draw ([yshift=\y]Alice.east) -- ([yshift=\y]Delta.west);
        \draw ([yshift=\y]Delta.east) -- ([yshift={\y+25}]R.west);
        \draw ([yshift=-\y]Bob.east) -- ([yshift={-25-\y}]R.west);
      }
    
    \end{tikzpicture}
  \end{center}

  \caption{A $\class{CQRG}(1)$ referee.
    The register $\reg{A}$ is initially measured (or, equivalently, dephased)
    with respect to the standard basis, causing a classical state to be input
    into $Q_x$, along with the register $\reg{B}$, which is unaffected by this
    standard basis measurement.}
  \label{fig:CQRG-referee}
\end{figure}

Formally speaking, the standard basis measurement suggested by
Definition~\ref{definition:CQRG(1)} can be implemented by independently
performing the completely dephasing channel on each qubit of $\reg{A}$.
This channel can be constructed using the universal gate set we have selected
using a Toffoli gate with suitably initialized inputs as follows:

\begin{center}
\begin{tikzpicture}[scale=1]
  \node (In1) at (-2,1) {};
  \node (Out1) at (2,1) {};
  \node[minimum size=20pt] (A1) at (-1,0) [draw] {\small $\ket{1}$};
  \node[minimum size=20pt] (A2) at (-1,-1) [draw] {\small $\ket{0}$};
  \node[minimum size=20pt] (Tr1) at (1,0) [draw] {Tr};
  \node[minimum size=20pt] (Tr2) at (1,-1) [draw] {Tr};
  
  \node[circle, fill, minimum size = 4pt, inner sep=0mm]
  (Control1) at (0,1) {};
  
  \node[circle, fill, minimum size = 4pt, inner sep=0mm]
  (Control2) at (0,0) {};
  
  \node[circle, draw, minimum size = 7pt, inner sep=0mm]
  (Target1) at (0,-1) {};
  
  \draw (Control1.center) -- (Target1.south);
  \draw (A1.east) -- (Tr1.west);
  \draw (A2.east) -- (Tr2.west);
  \draw (In1) -- (Out1);

\end{tikzpicture}
\end{center}

\noindent
Here the square labeled $\ket{0}$ is an ancillary gate, the square labeled
$\ket{1}$ denotes an ancillary gate composed with a not-gate $X = H P P H$
(for $H$ and $P$ denoting Hadamard and phase-shift gates), and the square
labeled {Tr} denotes an erasure gate.

In effect, a referee $R$ that satisfies the first requirement of
Definition~\ref{definition:CQRG(1)} forces the state Alice plays to be a
classical state (i.e., a state represented by a diagonal density operator).
That is, for any density operator $\rho$ that Alice might choose to play,
the state of $\reg{A}$ that is input into $Q_x$ takes the form
\begin{equation}
  \label{eq:Alice-classical-state}
  \sum_{y\in\Sigma^n} p(y)\, \ket{y}\bra{y}
\end{equation}
for some probability vector $p$ over $n$-bit strings, and therefore
the state that is plugged into the top $n$ qubits of the circuit $Q_x$
represents a classical state.
Given that the standard basis measurement acts trivially on all diagonal
states, we observe that Alice may cause an arbitrary diagonal density operator
of the form \eqref{eq:Alice-classical-state} to be input into $Q_x$.
In short, the set of possible states that may be input into the top $n$ qubits
of the circuit $Q_x$ is precisely the set of diagonal $n$-qubit density
operators.

The second variant of $\class{QRG}(1)$ we define is one in which Alice and Bob
both send quantum states to the referee, but the referee first measures Alice's
state, obtaining a classical outcome, which is then measured together with
Bob's state (as illustrated in Figure~\ref{fig:MQRG-referee}).

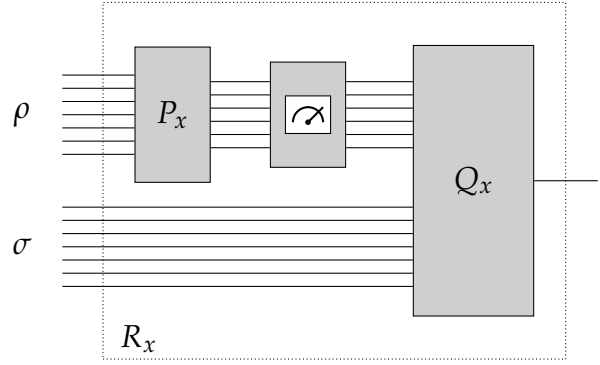
\begin{figure}[t]
  \begin{center}
    \begin{tikzpicture}[scale=1,
        circuit/.style={draw, minimum height=36mm, minimum width=16mm,
          fill = ChannelColor, text=ChannelTextColor},
        smallcircuit/.style={draw, minimum height=18mm, minimum width=10mm,
          fill = ChannelColor, text=ChannelTextColor},
        measure/.style={draw, minimum width=7mm, minimum height=7mm,
          fill = ChannelColor},
        >=latex]
      
      \node[draw, densely dotted, minimum width=175, minimum height=135]
      at (-1.85,0) {};

      \node at (-4.45,-2.1) {$R_x$};
      \node (R) at (0,0) [circuit] {$Q_x$};
      \node (Delta) at (-2.2,0) [smallcircuit,minimum height=14mm,yshift=25] {};

      \node[draw, minimum width=6mm, minimum height=5mm, fill=ReadoutColor]
      (readout) at (Delta) {};
      \draw[thick] ($(Delta)+(0.2,-0.1)$) arc (0:180:2mm);
      \draw[thick] ($(Delta)+(0.2,0.1)$) -- ($(Delta)+(0,-0.1)$);
      \draw[fill] ($(Delta)+(0,-0.1)$) circle (0.3mm);

      \node (M) at (1.8,0) {};
      \node (P) at (-4,0) [smallcircuit,yshift=25] {$P_x$};
      \draw (R.east) -- (M.west);
      
      \node[minimum width=30] (Alice) at (-6,0) [yshift=25] {$\rho$};
      \node[minimum width=30] (Bob) at (-6,0) [yshift=-25] {$\sigma$};
      
      \foreach \y in {-15,-10,-5,0,5,10,15} {
        \draw ([yshift=\y]Alice.east) -- ([yshift=\y]P.west);
        \draw ([yshift=-\y]Bob.east) -- ([yshift={-25-\y}]R.west);
      }
      
      \foreach \y in {-12.5,-7.5,-2.5,2.5,7.5,12.5} {
        \draw ([yshift=\y]P.east) -- ([yshift=\y]Delta.west);
        \draw ([yshift=\y]Delta.east) -- ([yshift={\y+25}]R.west);
      }
      
    \end{tikzpicture}
  \end{center}
  \caption{An $\class{MQRG}(1)$ referee.}
    \label{fig:MQRG-referee}
\end{figure}

\begin{definition}
  \label{definition:MQRG(1)}
  A promise problem $A = (A_{\text{yes}},A_{\text{no}})$ is contained in the
  complexity class $\class{MQRG}(1)_{\alpha,\beta}$ if there exists a referee
  $R = \{R_x\,:\,x\in\Sigma^{\ast}\}$
  such that the following properties are satisfied:
  \begin{mylist}{8mm}
  \item[1.]
    For every string $x\in\Sigma^{\ast}$, the circuit $R_x$ takes the form
    illustrated in Figure~\ref{fig:MQRG-referee}.
    That is, $R_{x}$ takes an $n$-qubit register $\reg{A}$ and an $m$-qubit
    register $\reg{B}$ as input, and first applies a quantum circuit $P_x$ to
    $\reg{A}$, yielding a $k$-qubit register $\reg{Y}$, for $k$ a polynomially
    bounded function.
    The register $\reg{Y}$ is then measured with respect to the standard basis,
    so that it then contains a classical state, and finally a quantum circuit
    $Q_x$ is applied to the pair $(\reg{Y}, \reg{B})$, yielding a single
    qubit.
  \item[2.]
    For every string $x\in A_{\text{yes}}$, it is the case that
    $\omega(R_x) \geq \alpha$.
  \item[3.]
    For every string $x\in A_{\text{no}}$, it is the case that
    $\omega(R_x) \leq \beta$.
  \end{mylist}
  We also define $\class{MQRG}(1) = \class{MQRG}(1)_{2/3,1/3}$.
\end{definition}

In essence, an $\class{MQRG}(1)$ referee measures Alice's qubits with respect
to a general, efficiently implementable measurement, which yields a $k$-bit
classical outcome, which is then plugged into $Q_{x}$ along with Bob's quantum
state.

It is of course immediate that
\begin{equation}
  \class{CQRG(1)} \subseteq \class{MQRG(1)} \subseteq \class{QRG(1)};
\end{equation}
a $\class{CQRG(1)}$ referee is a special case of an $\class{MQRG(1)}$
referee in which $P_{x}$ is the identity map on $n$ qubits, while an
$\class{MQRG(1)}$ referee is a special case of a $\class{QRG(1)}$ referee.
We also observe that both $\class{CQRG}(1)$ and $\class{MQRG}(1)$ are robust
with respect to error bounds in the same way as was described above for
$\class{QRG}(1)$.

\section{Upper-bound on CQRG(1)}

In this section, we prove that $\class{CQRG(1)}$ is contained in
$\exists \cdot \class{PP}$.
The proof represents a fairly direct application of the
Alth\"ofer--Lipton--Young \cite{Althofer1994,LiptonY1994} technique, although
(as was suggested above) the quantum setting places a new demand on this
technique that requires the use of a tail bound on sums of matrix-valued random
variables.
We will split the proof of this containment into two lemmas, followed by a
short proof of the main theorem---this is done primarily because the lemmas
will also be useful for proving
$\class{MQRG(1)} \subseteq \class{P} \cdot \class{PP}$ in the section following
this one.
Some readers may wish to skip to the statement and proof of
Theorem~\ref{theorem:CQRG(1) in NP.PP} below, as it explains the purpose of
these two lemmas within the context of that theorem.

The first lemma represents an implication of
Theorem~\ref{theorem:eigenvalue-tail-bound} due to Tropp to the setting at
hand.

\begin{lemma}
  \label{lemma:flat-distribution-quality}
  Let $k$ and $m$ be positive integers, let $p\in\P(\Sigma^k)$ be a
  probability distribution on $k$-bit strings, let $S_y$ be a
  $2^m\times 2^m$ positive semidefinite operator satisfying $0\leq S_y \leq \I$
  for each $y\in\Sigma^k$, and let
  $N \geq 72(m+2)$.
  For strings $y_1,\ldots,y_N\in\Sigma^k$ sampled independently from the
  distribution~$p$, it is the case that
  \begin{equation}
    \op{Pr}\Biggl(
    \lambda_{\text{min}}\Biggl(\frac{S_{y_1} + \cdots + S_{y_N}}{N}\Biggr)
    < \lambda_{\text{min}}\Biggl(\sum_{y\in\Sigma^k}p(y)S_y\Biggr)
    - \frac{1}{12}\Biggr) < \frac{1}{3}.
  \end{equation}
\end{lemma}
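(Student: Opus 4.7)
The plan is to apply Theorem~\ref{theorem:eigenvalue-tail-bound} directly, with the operator-valued random variables taken to be $X_k = S_{y_k}$ for $y_1,\ldots,y_N$ sampled independently from the distribution $p$. Because each $S_y$ satisfies $0 \leq S_y \leq \I$ and acts on a space of dimension $d = 2^m$, the random variables $X_1,\ldots,X_N$ satisfy the hypotheses of Tropp's bound, with
\begin{equation}
  \op{E}(X_k) \;=\; \sum_{y\in\Sigma^k} p(y)\, S_y.
\end{equation}
Setting $\eta = \lambda_{\text{min}}(\op{E}(X_k))$, the expected operator automatically satisfies $\lambda_{\text{min}}(\op{E}(X_k)) \geq \eta$, so condition 2 of the theorem is met tautologically.

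Next, I would instantiate the tail bound with $\varepsilon = 1/12$ and $d = 2^m$, which yields
\begin{equation}
  \op{Pr}\Biggl(
  \lambda_{\text{min}}\Biggl(\frac{S_{y_1} + \cdots + S_{y_N}}{N}\Biggr)
  < \eta - \frac{1}{12}\Biggr)
  \;\leq\; 2^m \exp\bigl(-N/72\bigr).
\end{equation}
The remaining step is to verify that the hypothesis $N \geq 72(m+2)$ suffices to push the right-hand side below $1/3$. Substituting $N = 72(m+2)$ gives $2^m \exp(-(m+2)) = (2/e)^m e^{-2}$, and since $2/e < 1$ this is bounded above by $e^{-2} \approx 0.135 < 1/3$, as required.

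I do not anticipate any real obstacle beyond arithmetic: the content of the lemma is essentially a repackaging of the Tropp bound in which the constants $(1/12, 72, 1/3)$ are chosen so that $\varepsilon^2 N / \log 2 \geq m$ with enough slack for the $\exp(-2)$ factor to absorb the dimension. The only thing to be slightly careful about is ensuring that the sampled-with-replacement picture matches the i.i.d. hypothesis of the theorem, which it does since $y_1,\ldots,y_N$ are independent draws from $p$ by assumption. This is exactly the form in which the lemma will be applied later to argue that a small set of pure (or classical) strategies approximates an optimal mixed strategy, so phrasing it in terms of $\lambda_{\text{min}}$ of the empirical average of measurement operators is precisely what is needed downstream.
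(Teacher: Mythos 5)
Your proof is correct and follows exactly the paper's own argument: define i.i.d. operator-valued random variables taking value $S_y$ with probability $p(y)$, then apply Theorem~\ref{theorem:eigenvalue-tail-bound} with $\eta = \lambda_{\text{min}}(\op{E}(X_k))$, $\varepsilon = 1/12$, and $d = 2^m$, and verify that $N \geq 72(m+2)$ pushes $2^m\exp(-N/72)$ below $1/3$. You in fact spell out the final arithmetic (the $(2/e)^m e^{-2}$ computation) a bit more explicitly than the paper does, which simply asserts the last inequality.
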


\begin{proof}
    Define $X_1,\ldots,X_N$ to be independent and identically distributed
  operator-valued random variables, each taking the (operator) value $S_y$
  with probability $p(y)$, for every $y\in\Sigma^k$.
  The expected value of each of these random variables is therefore given by
  \begin{equation}
    P = \sum_{y\in\Sigma^k}p(y)S_y.
  \end{equation}
  By taking $\eta = \lambda_{\text{min}}(P)$ and $\varepsilon = 1/12$ in
  Theorem~\ref{theorem:eigenvalue-tail-bound}, we find that
  \begin{equation}
    \op{Pr}\biggl(
    \lambda_{\textup{min}}\biggl(
    \frac{X_1 + \cdots + X_N}{N}
    \biggr) < \lambda_{\text{min}}(P) - \frac{1}{12}\biggr)
    \leq 2^m \exp\biggl(-\frac{N}{72}\biggr) < \frac{1}{3},
  \end{equation}
  which is equivalent to the bound stated in the lemma.
\end{proof}

The second lemma uses counting complexity to relate the minimum eigenvalue of
measurement operators defined by quantum circuits to PP languages.
We note that the technique of weakly estimating the largest eigenvalue
of a measurement operator using the trace of a power of that operator, through
the relations \eqref{eq:max-eigenvalue versus trace} appearing in the proof
below, is the essential idea behind the unpublished proof of the containment
$\class{QMA}\subseteq\class{PP}$ claimed in \cite{KitaevW00}.

\begin{lemma}
  \label{lemma:PP-eigenvalue-estimator}
  Let $\{Q_x\,:\,x\in\Sigma^{\ast}\}$ be a polynomial-time generated family of
  quantum circuits, where each circuit $Q_x$ takes as input a $k$-qubit
  register $\reg{Y}$ and an $m$-qubit register $\reg{B}$, for polynomially
  bounded functions $k$ and $m$, and outputs a single qubit.
  For each $x\in\Sigma^{\ast}$ and $y\in\Sigma^k$, define an operator
  \begin{equation}
    S_{x,y} = \bigl(\bra{y} \otimes \I_{\B}\bigr)
    Q_x^{\ast} ( \ket{1}\bra{1} )
    \bigl(\ket{y} \otimes \I_{\B}\bigr).
  \end{equation}
  For every polynomially bounded function~$N$, there exists
  a language $B\in\class{PP}$ for which the following implications are true
  for all $x\in\Sigma^{\ast}$ and $y_1,\ldots,y_N\in\Sigma^k$:
  \begin{align}    
    \lambda_{\text{min}}\biggl(\frac{S_{x,y_1} + \cdots + S_{x,y_N}}{N}\biggr)
    \geq \frac{2}{3}
    & \quad \Rightarrow \quad (x,y_1\cdots y_N) \in
    B,\label{eq:eigenvalue-implication-1}\\[2mm]
    \lambda_{\text{min}}\biggl(\frac{S_{x,y_1} + \cdots + S_{x,y_N}}{N}\biggr)
    \leq \frac{1}{3}
    & \quad \Rightarrow \quad (x,y_1\cdots y_N) \not\in B.
    \label{eq:eigenvalue-implication-2}
  \end{align}
\end{lemma}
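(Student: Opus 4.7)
The plan is to use the standard trick of estimating an eigenvalue of a Hermitian operator through the trace of a high power, combined with the $\class{GapP}$ closure properties developed in Section~\ref{sec:preliminaries}.  The goal is to reduce the two cases of the lemma's hypothesis to comparing a $\class{GapP}$-computable trace against a fixed integer threshold, which then places the corresponding decision problem in $\class{PP}$.

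First, I would express each entry of $S_{x,y}$ as a scaled $\class{GapP}$ function.  Applying Lemma~\ref{lemma:circuit-GapP} to the family $\{Q_x\}$ together with the identity $\bra{u}Q_x^{\ast}(\ket{1}\bra{1})\ket{v} = \bra{1}Q_x(\ket{v}\bra{u})\ket{1}$ yields a polynomially bounded function $r$ and $\class{GapP}$ functions giving the real and imaginary parts of $2^{r}$ times every entry of $Q_x^{\ast}(\ket{1}\bra{1})$.  Restricting to the block picked out by $\ket{y}\otimes\I_{\B}$ and summing over $i = 1,\ldots,N$ (which preserves $\class{GapP}$) expresses the entries of $T' = \sum_{i=1}^{N} S_{x,y_{i}}$ in the same form.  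I would then work with the Hermitian matrix
\begin{equation*}
M \;=\; 3\bigl(2^{r}N\tsp\I \,-\, 2^{r} T'\bigr),
\end{equation*}
whose real and imaginary entries are integer-valued $\class{GapP}$ functions of the tuple $(x, y_{1}\cdots y_{N})$.  Since each $S_{x,y_i}\leq \I$, the operator $T'$ satisfies $T'\leq N\I$, so $M$ is positive semidefinite, and $\lambda_{\textup{max}}(M) = 3\cdot 2^{r}\bigl(N - \lambda_{\textup{min}}(T')\bigr)$.

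The power trick then provides the separation.  For any $2^{m}\times 2^{m}$ positive semidefinite matrix one has $\lambda_{\textup{max}}(M)^{\ell} \leq \tr(M^{\ell}) \leq 2^{m}\tsp\lambda_{\textup{max}}(M)^{\ell}$.  Taking $\ell = m+2$, the hypothesis $\lambda_{\textup{min}}(T'/N) \geq 2/3$ gives $\lambda_{\textup{max}}(M) \leq 2^{r}N$ and hence $\tr(M^{\ell}) \leq 2^{\ell-2}(2^{r}N)^{\ell}$, while $\lambda_{\textup{min}}(T'/N) \leq 1/3$ gives $\lambda_{\textup{max}}(M) \geq 2\cdot 2^{r}N$ and hence $\tr(M^{\ell}) \geq 2^{\ell}(2^{r}N)^{\ell}$.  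The factor-of-four gap lets the integer threshold $\tau = 2^{\ell-1}(2^{r}N)^{\ell}$ cleanly separate the two cases.

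The final step is to express $\tr(M^{\ell})$ as a $\class{GapP}$ function and wrap everything into a $\class{PP}$ language.  Applying Lemma~\ref{lemma:Gap.C matrix multiplication} with $\C=\class{P}$, viewing $M$ as $\ell$ identical copies indexed by $\Sigma^{\ell}_{1}$, shows the real and imaginary parts of the entries of $M^{\ell}$ are $\class{GapP}$, and Lemma~\ref{lemma:Gap.C closed under exponential sums} then shows that $\tr(M^{\ell}) = \sum_{w} \bra{w}M^{\ell}\ket{w}$ is itself a $\class{GapP}$ function (the diagonal of the Hermitian matrix $M^{\ell}$ is real, so only the real-part component contributes).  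Since $\tau$ depends only on $\abs{x}$ and $N$, the function $F(x,y_{1}\cdots y_{N}) = \tau - \tr(M^{\ell})$ is $\class{GapP}$, with $F>0$ in the first case and $F<0$ in the second.  Setting $B = \{(x,y_{1}\cdots y_{N}) : F(x,y_{1}\cdots y_{N}) > 0\}$, which lies in $\class{PP}$ by the standard $\class{GapP}$ characterization of that class (a specialization of Proposition~\ref{prop:Gap.C versus P.C} to $\C=\class{P}$), completes the argument.  The main obstacle is really bookkeeping: arranging the scaling so that every intermediate quantity is an integer-valued $\class{GapP}$ function, so that the matrix-multiplication closure lemma applies uniformly.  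The conceptual content is just the power-of-operator eigenvalue estimate highlighted in the lemma preamble.
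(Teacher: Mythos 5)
Your proposal is correct and follows essentially the same route as the paper's proof: reduce the eigenvalue gap to a trace-of-power comparison, express the trace as a $\class{GapP}$ function via Lemmas~\ref{lemma:circuit-GapP}, \ref{lemma:Gap.C closed under exponential sums}, and \ref{lemma:Gap.C matrix multiplication}, and conclude by thresholding to get a $\class{PP}$ language. The only differences are cosmetic bookkeeping: the paper works with the complementary operators $T_{x,y} = \I_{\B} - S_{x,y}$ and the $2m$-th power, whereas you absorb the complementation into the pre-scaled operator $M = 3\cdot 2^{r}(N\I - T')$ (a positive scalar multiple of the same matrix) and take the $(m+2)$-th power with a slightly different integer threshold.
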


\begin{proof}
  The essence of the proof is that if $P$ is an operator whose entries have
  real and imaginary parts proportional to $\class{GapP}$ functions, and $r$ is
  a polynomially bounded function, then there exists a $\class{GapP}$ function
  that is proportional to the real part of $\tr(P^r)$.
  When $P$ is a $2^m \times 2^m$ positive semidefinite operator, this allows one
  to choose $r$ to be sufficiently large, but still polynomially bounded,
  so that a $\class{GapP}$ function is obtained that takes positive or
  negative values in accordance with the required implications
  \eqref{eq:eigenvalue-implication-1} and \eqref{eq:eigenvalue-implication-2},
  through the use of the following bounds relating the largest eigenvalue and
  the trace of any such~$P$:
  \begin{equation}
    \label{eq:max-eigenvalue versus trace}
    \lambda_{\text{max}}(P)^r
    = \lambda_{\text{max}}(P^r)
    \leq \tr(P^r) \leq 2^m \lambda_{\text{max}}(P^r) =
    2^m \lambda_{\text{max}}(P)^r.
  \end{equation}
  In the case at hand, it will suffice to take $r = 2m$.

  In greater detail, let us begin by defining
  \begin{equation}
    T_{x,y} = \bigl(\bra{y} \otimes \I_{\B}\bigr)
    Q_x^{\ast} ( \ket{0}\bra{0} )
    \bigl(\ket{y} \otimes \I_{\B}\bigr)
  \end{equation}
  for each $x\in\Sigma^{\ast}$ and $y\in\Sigma^k$.
  Observe that $S_{x,y}$ and $T_{x,y}$ are positive semidefinite operators
  satisfying $S_{x,y} + T_{x,y} = \I_{\B}$, so that the implication in the
  statement of the lemma may alternatively be written as
  \begin{align}    
    \lambda_{\text{max}}\biggl(\frac{T_{x,y_1} + \cdots + T_{x,y_N}}{N}\biggr)
    \leq \frac{1}{3}
    & \quad \Rightarrow \quad (x,y_1\cdots y_N) \in B,\\[2mm]
    \lambda_{\text{max}}\biggl(\frac{T_{x,y_1} + \cdots + T_{x,y_N}}{N}\biggr)
    \geq \frac{2}{3}
    & \quad \Rightarrow \quad (x,y_1\cdots y_N) \not\in B.
  \end{align}
  Thus, if the operator
  \begin{equation}
    P_{x,y_1\cdots y_N} = \frac{T_{x,y_1} + \cdots + T_{x,y_N}}{N}
  \end{equation}
  satisfies $\lambda_{\text{max}}(P_{x,y_1\cdots y_N})\leq 1/3$, then
  \begin{equation}
    \tr(P_{x,y_1\cdots y_N}^{2m}) \leq \frac{2^m}{3^{2m}} < \frac{1}{3^m}
  \end{equation}
  while if $\lambda_{\text{max}}(P_{x,y_1\cdots y_N})\geq 2/3$, then
  \begin{equation}
    \tr(P_{x,y_1\cdots y_N}^{2m}) \geq \Bigl(\frac{2}{3}\Bigr)^{2m}
    > \frac{1}{3^m}.
  \end{equation}
  
  By Lemma~\ref{lemma:circuit-GapP} there exists a polynomially bounded
  function $r$ along with $\class{GapP}$ functions $f$ and $g$ satisfying
  \begin{equation}
    \begin{aligned}
      \op{Re}(\bra{z} T_{x,y} \ket{w})
      & = \op{Re}\bigl(\bra{0} Q_x \bigl( \ket{yz}\bra{yw} \bigr) \ket{0}\bigr)
      = 2^{-r} f(x,y,z,w),\\
      \op{Im}(\bra{z} T_{x,y} \ket{w})
      & = -\op{Im}\bigl(\bra{0} Q_x \bigl( \ket{yz}\bra{yw} \bigr) \ket{0}\bigr)
      = 2^{-r} g(x,y,z,w),
    \end{aligned}
  \end{equation}
  for all $x\in\Sigma^{\ast}$, $y\in\Sigma^k$, and $z,w\in\Sigma^m$.
  Define functions $F$ and $G$ as follows:
  \begin{equation}
    \begin{aligned}
      F(x,y_1\cdots y_N,z,w) & = f(x,y_1,z,w) + \cdots + f(x,y_N,z,w),\\
      G(x,y_1\cdots y_N,z,w) & = g(x,y_1,z,w) + \cdots + g(x,y_N,z,w),
    \end{aligned}
  \end{equation}
  for all $x\in\Sigma^{\ast}$, $y_1,\ldots,y_N\in\Sigma^k$, and
  $z,w\in\Sigma^m$.  
  It is the case that $F$ and $G$ are $\class{GapP}$ functions satisfying
  \begin{equation}
    \begin{aligned}
      F(x,y_1\cdots y_N,z,w)
      & = 2^{r}\cdot N \cdot \op{Re}(\bra{z}P_{x,y_1\cdots y_N}\ket{w}),\\
      G(x,y_1\cdots y_N,z,w)
      & = 2^{r}\cdot N \cdot \op{Im}(\bra{z}P_{x,y_1\cdots y_N}\ket{w}).
    \end{aligned}
  \end{equation}

  Through an application of
  Lemmas~\ref{lemma:Gap.C closed under exponential sums} and
  \ref{lemma:Gap.C matrix multiplication}, we conclude that there 
  must exist a $\class{GapP}$ function $H$ satisfying
  \begin{equation}
    H(x,y_1\cdots y_N)
    = 2^{2 r m} \cdot N^{2m} \cdot \tr\bigl(P_{x,y_1\cdots y_N}^{2m}\bigr).
  \end{equation}
  The $\class{GapP}$ function
  \begin{equation}
    K(x,y_1\cdots y_N) = 
    2^{2rm} \cdot N^{2m} - 3^m \cdot H(x,y_1\cdots y_N)
  \end{equation}
  therefore takes positive values if
  $\lambda_{\text{max}}(P_{x,y_1\cdots y_N})\leq 1/3$, and takes negative
  values if $\lambda_{\text{max}}(P_{x,y_1\cdots y_N})\geq 2/3$, implying the
  existence of a $\class{PP}$ language $B$ as claimed.  
\end{proof}

\pagebreak[3]

\begin{theorem}
  \label{theorem:CQRG(1) in NP.PP}
  $\class{CQRG}(1) \subseteq \exists\cdot\class{PP}$.
\end{theorem}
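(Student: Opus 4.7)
The plan is to combine the Althöfer--Lipton--Young ``sparsification'' technique, in the quantum-friendly form of Lemma~\ref{lemma:flat-distribution-quality}, with the $\class{GapP}$-based spectral estimator of Lemma~\ref{lemma:PP-eigenvalue-estimator}. The $\exists\cdot\class{PP}$ witness will be a list of $N$ classical strings representing a flat distribution for Alice, and the $\class{PP}$ predicate will check that this flat distribution does well against every possible state Bob can play.

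First, I would make the game values concrete. For a $\class{CQRG}(1)$ referee $R$ consisting of the measurement of $\reg{A}$ followed by $Q_x$, Alice's messages are probability distributions $p$ on $\Sigma^n$. Using the operators $S_{x,y} = (\bra{y}\otimes\I_\B) Q_x^\ast(\ket{1}\bra{1}) (\ket{y}\otimes\I_\B)$ from Lemma~\ref{lemma:PP-eigenvalue-estimator}, the value of the game when Alice plays $p$ is $\min_\sigma \tr\bigl((\sum_y p(y) S_{x,y})\sigma\bigr) = \lambda_{\text{min}}\bigl(\sum_y p(y) S_{x,y}\bigr)$, and thus $\omega(R_x) = \max_p \lambda_{\text{min}}\bigl(\sum_y p(y) S_{x,y}\bigr)$. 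Using the robustness of $\class{CQRG}(1)$ under error reduction, I would first amplify so that $\alpha\geq 3/4$ on yes-instances and $\beta\leq 1/4$ on no-instances.

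Next, on a yes-instance there is a distribution $p$ achieving $\lambda_{\text{min}}\bigl(\sum_y p(y)S_{x,y}\bigr) \geq 3/4$. Applying Lemma~\ref{lemma:flat-distribution-quality} with $N = 72(m+2)$ (polynomially bounded in $|x|$), sampling $y_1,\ldots,y_N$ independently from $p$ yields $\lambda_{\text{min}}\bigl(\frac{1}{N}\sum_i S_{x,y_i}\bigr) \geq 3/4 - 1/12 = 2/3$ with probability strictly greater than $2/3$. In particular, some choice of $(y_1,\ldots,y_N)$ achieves this bound, so it serves as a valid existential witness. On a no-instance, for every list $y_1,\ldots,y_N$, the uniform distribution on the multiset $\{y_1,\ldots,y_N\}$ is itself a candidate for Alice, so $\lambda_{\text{min}}\bigl(\frac{1}{N}\sum_i S_{x,y_i}\bigr) \leq \omega(R_x) \leq 1/4 < 1/3$. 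Thus on yes-instances at least one witness list satisfies the ``$\geq 2/3$'' branch, while on no-instances every list satisfies the ``$\leq 1/3$'' branch.

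Finally, Lemma~\ref{lemma:PP-eigenvalue-estimator} hands over a language $B\in\class{PP}$ that separates these two branches: $(x,y_1\cdots y_N)\in B$ on the yes side, and $(x,y_1\cdots y_N)\notin B$ on the no side. Encoding the witness list as a single string of length $Nk$ (polynomial in $|x|$) via the paper's tuple function and the standard $\exists$-operator formalism of Definition~\ref{def:NP and PP operators} immediately gives the containment $\class{CQRG}(1)\subseteq\exists\cdot\class{PP}$. The only subtle point, and the place where the argument genuinely departs from the classical proof of $\class{RG}(1)\subseteq\class{S}_2^p$, is the existence step: a plain Hoeffding--Chernoff bound on scalar win probabilities is not enough because Bob can coordinate his state with the sampled $y_i$'s, and this is exactly why Tropp's matrix tail bound (encapsulated in Lemma~\ref{lemma:flat-distribution-quality}) is essential.
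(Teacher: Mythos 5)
Your proposal is correct and follows essentially the same route as the paper: reduce to estimating $\omega(R_x) = \max_p \lambda_{\text{min}}\bigl(\sum_y p(y)S_{x,y}\bigr)$, use Lemma~\ref{lemma:flat-distribution-quality} with $N=72(m+2)$ to obtain a polynomial-length flat-distribution witness on yes-instances while noting that on no-instances every flat distribution is already bounded by $\omega(R_x)\le 1/4$, and then plug into Lemma~\ref{lemma:PP-eigenvalue-estimator} to get the $\class{PP}$ predicate. The only nit is the witness length, which should be $Nn$ (strings in $\Sigma^n$) rather than $Nk$ in the notation of Definition~\ref{definition:CQRG(1)}, but this is just a renaming of the variable used in Lemma~\ref{lemma:PP-eigenvalue-estimator} and does not affect the argument.
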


\begin{proof}
  Let $A = (A_{\text{yes}},A_{\text{no}})$ be any promise problem contained in
  $\class{CQRG}(1)$, let a referee be fixed that establishes the inclusion
  $A \in \class{CQRG}(1)_{3/4,1/4}$, and let $\{Q_x\,:\,x\in\Sigma^{\ast}\}$ be
  the collection of circuits that describes this referee, in accordance with
  Definition~\ref{definition:CQRG(1)}.

  Let $x\in A_{\text{yes}} \cup A_{\text{no}}$ be any input string.
  Consider first the situation that Alice plays deterministically, sending a
  string $y\in\Sigma^n$ to the referee, so that $\rho = \ket{y}\bra{y}$.
  Having selected a state $\rho$ representing Alice's play, we are effectively
  left with a binary-valued measurement being performed on the state sent to
  the referee by Bob.
  We observe that, for any choice of a state $\sigma\in\Density(\cal{B})$
  representing Bob's play, the probabilities that the referee's measurement
  generates the outcomes 0 and~1 are given by
  \begin{equation}
    \bra{0} Q_x ( \ket{y}\bra{y} \otimes \sigma) \ket{0}
    \quad\text{and}\quad
    \bra{1} Q_x ( \ket{y}\bra{y} \otimes \sigma) \ket{1},
  \end{equation}
  respectively.
  By defining an operator $S_{x,y} \in \Pos(\B)$ as
  \begin{equation}
    S_{x,y} = \bigl(\bra{y} \otimes \I_{\B}\bigr)
    Q_x^{\ast} ( \ket{1}\bra{1} ) \bigl(\ket{y} \otimes \I_{\B}\bigr),
  \end{equation}
  we therefore obtain the measurement operator corresponding to the 1 outcome
  of this measurement, as
  \begin{equation}
    \label{definitionofsxy}
    \tr\bigl( S_{x,y}\tsp \sigma\bigr)
    = \bra{1} Q_x ( \ket{y}\bra{y} \otimes \sigma) \ket{1}
  \end{equation}
  and
  \begin{equation}
    \tr\bigl( (\I_{\B} - S_{x,y})\tsp \sigma\bigr) =
    \bra{0} Q_x ( \ket{y}\bra{y} \otimes \sigma) \ket{0}
  \end{equation}
  for all $\sigma \in \Density(\B)$.
  
  Now, as Bob aims to minimize the probability for outcome 1 to appear, the
  relevant property of the operator $S_{x,y}$ is its \emph{minimum eigenvalue}
  $\lambda_{\text{min}}(S_{x,y})$.
  A large minimum eigenvalue means that Alice has managed to force the outcome
  1 to appear, regardless of what state Bob plays, whereas a small minimum
  eigenvalue means that Bob has at least one choice of a state that causes the
  outcome 1 to appear with small probability.
  Stated in more precise terms, Bob's optimal strategy in the case that Alice
  plays $\rho = \ket{y}\bra{y}$ is to play any state $\sigma\in\Density(\B)$
  whose image is contained in the eigenspace of $S_{x,y}$ corresponding to the
  minimum eigenvalue $\lambda_{\text{min}}(S_{x,y})$, which leads to a win for
  Alice with probability equal to this minimum eigenvalue and a win for Bob with
  probability $1 - \lambda_{\text{min}}(S_{x,y})$.

  In general, Alice will not play deterministically, but will instead play
  a distribution of strings $p\in\P(\Sigma^n)$.
  In this case, the resulting measurement operator on Bob's space becomes
  \begin{equation}
    \label{eq:expected-operator-p}
    \sum_{y\in\Sigma^n} p(y) S_{x,y}.
  \end{equation}
  That is to say, the probability that Alice wins when she plays a distribution
  $p\in\P(\Sigma^n)$, and Bob plays optimally against this distribution,
  is given by the expression
  \begin{equation}
    \label{eq:min-eigenvalue-distribution}
    \lambda_{\text{min}}\Biggl(\sum_{y\in\Sigma^n} p(y) S_{x,y}\Biggr).
  \end{equation}
  Determining whether $x$ is a yes-instance or a no-instance of $A$ is
  therefore equivalent to discriminating between the case that there exists a
  distribution $p\in\P(\Sigma^n)$ for which the minimum eigenvalue
  \eqref{eq:min-eigenvalue-distribution} is at least $3/4$ and the case in
  which this minimum eigenvalue is at most $1/4$ for all choices of
  $p\in\P(\Sigma^n)$.

  The goal of the proof is to show that this decision problem is contained
  in $\exists\cdot\class{PP}$.
  The $\exists$ operator will represent the existence or non-existence of a
  distribution $p\in\P(\Sigma^n)$ for which the minimum eigenvalue
  \eqref{eq:min-eigenvalue-distribution} is large, while a PP predicate will
  allow for an estimation of this minimum eigenvalue itself.
  A challenge that must be overcome in making this approach work is that using
  the $\exists$ operator in this way requires Alice's strategy to have a
  polynomial-length representation.
  However, given that a distribution $p\in\P(\Sigma^n)$ may have support that
  is exponentially large in $n$, an explicit description of $p$ will generally
  have exponential size, assuming that the individual probabilities $p(y)$ are
  represented with a polynomial number of bits of precision.

  This obstacle may be overcome using the Alth\"ofer--Lipton--Young
  \cite{Althofer1994,LiptonY1994} technique mentioned in the introduction:
  in place of a distribution $p\in\P(\Sigma^n)$, we consider an $N$-tuple of
  strings $(y_1,\ldots,y_N)$, representing $N$ possible deterministic plays for
  Alice, for $N = N(\abs{x})$ being a suitable polynomially bounded function of
  the input length.
  This $N$-tuple will represent the distribution $q\in\P(\Sigma^n)$ obtained by
  selecting $j\in\{1,\ldots,N\}$ uniformly at random and then outputting the
  string $y_j$.
  That is, the distribution $q\in\P(\Sigma^n)$ represented by the $N$-tuple
  $(y_1,\ldots,y_N)$ is given by
  \begin{equation}
    \label{eq:flat-poly-distribution}
    q(y) = \frac{\bigabs{\{j\in\{1,\ldots,N\}\,:\,y = y_j\}}}{N}
  \end{equation}
  for each $y\in\Sigma^n$.
  Naturally, most choices of a distribution $p\in\P(\Sigma^n)$ are far away
  from any such distribution $q$.
  Nevertheless, the existence of a distribution $p\in\P(\Sigma^n)$ for which
  the minimum eigenvalue \eqref{eq:min-eigenvalue-distribution} is large does
  in fact imply the existence of an $N$-tuple $(y_1,\ldots,y_N)$ for which the
  distribution $q\in\P(\Sigma^n)$ defined by \eqref{eq:flat-poly-distribution}
  is still a good play for Alice, meaning that the minimum eigenvalue
  \begin{equation}
    \label{eq:min-eigenvalue-flat}
    \lambda_{\text{min}}\biggl(\frac{S_{x,y_1} + \cdots + S_{x,y_N}}{N}\biggr)
  \end{equation}
  is also large, provided $N$ is sufficiently large.
  This is precisely the content of Lemma~\ref{lemma:flat-distribution-quality}.

  In particular, by choosing $N = 72(m+2)$, where $m$ is the number of qubits
  of $\reg{B}$, we find that if the minimum eigenvalue
  \eqref{eq:min-eigenvalue-distribution} is at least 3/4, then with probability
  at least 2/3 (over the random choices of $y_1,\ldots,y_N$) the minimum
  eigenvalue \eqref{eq:min-eigenvalue-distribution} is at least 2/3.
  Of course, this implies the existence of an $N$-tuple
  $(y_1,\ldots,y_N)$ for which the minimum eigenvalue
  \eqref{eq:min-eigenvalue-distribution} is at least 2/3.

  Naturally, if $x\in A_{\text{no}}$, then the minimum eigenvalue
  \eqref{eq:min-eigenvalue-distribution} is at most $1/4$ for all choices of
  $p\in\P(\Sigma^n)$, and therefore it must be that
  \begin{equation}
    \label{eq:no-eigenvalue-bound}
    \lambda_{\text{min}}\biggl(\frac{S_{x,y_1} + \cdots + S_{x,y_N}}{N}\biggr)
    \leq \frac{1}{4} < \frac{1}{3}
  \end{equation}
  for all $N$-tuples $(y_1,\ldots,y_N)$.
  This is because the distribution $q$ defined by
  \eqref{eq:flat-poly-distribution} is simply one example of a distribution
  in $\P(\Sigma^n)$.

  The purpose of Lemma~\ref{lemma:PP-eigenvalue-estimator}
  is now evident, for it states that there exists a language
  $B\in\class{PP}$ such that if the minimum eigenvalue
  \eqref{eq:min-eigenvalue-flat} is at least $2/3$, then
  $(x,y_1\cdots y_N)\in B$, while if this minimum eigenvalue is at most
  $1/3$, then $(x,y_1\cdots y_N)\not\in B$.
  Consequently, if $x\in A_{\text{yes}}$, then there exists a string
  $y_1\cdots y_N\in\Sigma^{nN}$ such that $(x,y_1\cdots y_N)\in B$, while
  if $x\in A_{\text{no}}$, then for every string
  $y_1\cdots y_N\in\Sigma^{nN}$ it is the case that
  $(x,y_1\cdots y_N)\not\in B$.
  It has therefore been proved that $A\in \exists\cdot\class{PP}$ as required.
\end{proof}
 
\section{Upper-bound on MQRG(1)}

We now turn to the complexity class $\class{MQRG}(1)$, and prove the
containment $\class{MQRG}(1) \subseteq \class{P}\cdot\class{PP}$.
In order to do this, we will first introduce a $\class{QMA}$-operator that,
in some sense, functions in a way that is similar to the $\exists$ and
$\class{P}$ operators previously discussed.

\begin{figure}[t]
  \begin{center}
    \begin{tikzpicture}[scale=1,
        circuit/.style={draw, minimum height=36mm, minimum width=16mm,
          fill = ChannelColor, text=ChannelTextColor},
        smallcircuit/.style={draw, minimum height=18mm, minimum width=10mm,
          fill = ChannelColor, text=ChannelTextColor},
        measure/.style={minimum width=7mm, minimum height=7mm},
        >=latex]
      
      \node (P) at (0,0) [smallcircuit] {$\chi_B$};
      \node (Delta) at (-1.5,0) [smallcircuit,minimum height=14mm] {};
            
      \node (M) at (Delta) [measure] {};
      \node[draw, minimum width=6mm, minimum height=5mm, fill=ReadoutColor]
      (readout) at (M) {};
      \draw[thick] ($(M)+(0.2,-0.1)$) arc (0:180:2mm);
      \draw[thick] ($(M)+(0.2,0.1)$) -- ($(M)+(0,-0.1)$);
      \draw[fill] ($(M)+(0,-0.1)$) circle (0.3mm);
      
      \node (Q) at (-3,0) [smallcircuit] {$P_x$};
   
      \node[minimum width=30] (Alice) at (-4.5,0) {$\rho$};
      
      \foreach \y in {-15,-10,-5,0,5,10,15} {
        \draw ([yshift=\y]Alice.east) -- ([yshift=\y]Q.west);
      }
      
      \foreach \y in {-12.5,-7.5,-2.5,2.5,7.5,12.5} {
        \draw ([yshift=\y]Q.east) -- ([yshift=\y]Delta.west);
        \draw ([yshift=\y]Delta.east) -- ([yshift=\y]P.west);
      }

      \draw (P.east) -- ([xshift=4mm]P.east);
      
    \end{tikzpicture}
  \end{center}
  \caption{Definition~\ref{definition:QMA.C} is concerned with the probability
    that the output of a circuit $P_x$, measured with respect to the standard
    basis, is contained in the language $B$, assuming the input is $\rho$.}
  \label{fig:QMA with post-measurement predicate}
\end{figure}
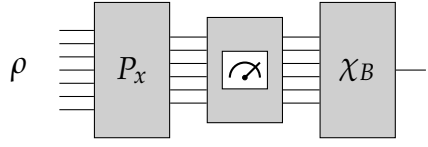

\begin{definition}
  \label{definition:QMA.C}
  For a given complexity class $\C$, the complexity class $\class{QMA}\cdot\C$
  contains all promise problems $A = (A_{\text{yes}},A_{\text{no}})$ for which
  there exists a polynomial-time generated family of quantum circuits
  $\{P_x\,:\,x\in\Sigma^{\ast}\}$, where each $P_x$ takes $n = n(\abs{x})$
  input qubits and outputs $k = k(\abs{x})$ qubits, along with a language
  $B\in\C$, such that the following implications hold.
  \begin{mylist}{8mm}
  \item[1.]
    If $x\in A_{\text{yes}}$, then there exists a density operator
    $\rho$ on $n$ qubits for which
    \begin{equation}
      \op{Pr}\bigl( P_x(\rho) \in B\bigr) \geq \frac{2}{3}.
    \end{equation}
  \item[2.]
    If $x\in A_{\text{no}}$, then for every density operator
    $\rho$ on $n$ qubits,
    \begin{equation}
      \op{Pr}\bigl( P_x(\rho) \in B\bigr) \leq \frac{1}{3}.
    \end{equation}
  \end{mylist}
  Here, the notation $P_x(\rho)\in B$ refers to the event that $P_x$ is
  applied to the state $\rho$, the output qubits are measured with respect to
  the standard basis, and the resulting string is contained in the language
  $B$.
  Figure~\ref{fig:QMA with post-measurement predicate} illustrates the
  associated process, with $\chi_B$ being the characteristic function of $B$
  on inputs of length $k$.
\end{definition}

\begin{theorem}
  \label{theorem:QMA.C-in-PP.C}
  If $\C$ is nontrivial complexity class of languages that is closed under
  joins and truth-table reductions, then
  $\class{QMA}\cdot\C \subseteq \class{P}\cdot\C$.
\end{theorem}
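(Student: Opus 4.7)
The plan is to apply Proposition~\ref{prop:Gap.C versus P.C}: I would construct a $\class{Gap}\cdot\C$ function $K$ that is positive on yes-instances of $A$ and non-positive on no-instances. Given a $\class{QMA}\cdot\C$ referee specified by $\{P_x\}$ and a language $B\in\C$, the maximum over input density operators $\rho$ of the probability $\op{Pr}(P_x(\rho)\in B)$ equals $\lambda_{\text{max}}(\Pi_{B,x})$, where
\begin{equation}
  M_B = \sum_{y\in B\cap\Sigma^k}\ket{y}\bra{y}
  \qquad\text{and}\qquad
  \Pi_{B,x} = P_x^{\ast}(M_B),
\end{equation}
and $0\leq\Pi_{B,x}\leq\I$. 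The task is therefore to distinguish $\lambda_{\text{max}}(\Pi_{B,x})\geq 2/3$ from $\lambda_{\text{max}}(\Pi_{B,x})\leq 1/3$, which the trace-power method of Lemma~\ref{lemma:PP-eigenvalue-estimator} is designed to handle: raising $\Pi_{B,x}$ to the power $2n$ and using $\lambda_{\text{max}}(\Pi_{B,x})^{2n}\leq\tr(\Pi_{B,x}^{2n})\leq 2^n\lambda_{\text{max}}(\Pi_{B,x})^{2n}$ separates the yes case $\tr\geq(4/9)^n$ from the no case $\tr\leq(2/9)^n$, with the threshold $(1/3)^n$ strictly between them.

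The first step I would carry out is to express the real and imaginary parts of each entry $\bra{z}\Pi_{B,x}\ket{w}$ as $2^{-r}$ times a $\class{Gap}\cdot\C$ function. Using the identity $\bra{z}P_x^{\ast}(\ket{y}\bra{y})\ket{w}=\bra{y}P_x(\ket{w}\bra{z})\ket{y}$, Lemma~\ref{lemma:circuit-GapP} supplies $\class{GapP}$ functions for the $y$-indexed summands. Collapsing the sum over $y\in B\cap\Sigma^k$ via Lemma~\ref{lemma:Gap.C closed under exponential sums} then requires the auxiliary observation that for any $\class{GapP}$ function $h(x,y)$, the restricted function $\chi_B(y)\cdot h(x,y)$ is a $\class{Gap}\cdot\C$ function. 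This follows from the hypotheses on $\C$: nontriviality together with closure under truth-table reductions implies $\class{P}\subseteq\C$, and a $\class{P}$-witness language for $h$, intersected with $B\in\C$ by a conjunctive truth-table reduction, remains in $\C$.

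With $\class{Gap}\cdot\C$ entries for $\Pi_{B,x}$ in hand, the remainder of the argument parallels Lemma~\ref{lemma:PP-eigenvalue-estimator} with $\class{GapP}$ replaced by $\class{Gap}\cdot\C$ throughout. Lemma~\ref{lemma:Gap.C matrix multiplication} yields $\class{Gap}\cdot\C$ entries for $\Pi_{B,x}^{2n}$, Lemma~\ref{lemma:Gap.C closed under exponential sums} sums the diagonal to produce a $\class{Gap}\cdot\C$ function $H$ equal, up to a polynomially-bounded power-of-two rescaling $c$, to $\tr(\Pi_{B,x}^{2n})$, and the combination $K(x)=9^n\cdot H(x)-3^n\cdot c$ is a $\class{Gap}\cdot\C$ function that is positive on yes-instances and non-positive on no-instances. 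By Proposition~\ref{prop:Gap.C versus P.C}, this gives $A\in\class{P}\cdot\C$.

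The hard part will be the $\C$-filter step in the second paragraph: combining the $\class{P}$-level information generated by the quantum circuit $P_x$ with the $\C$-language $B$ while staying inside $\class{Gap}\cdot\C$. This is exactly where the nontriviality, join-closure, and truth-table-reduction-closure hypotheses on $\C$ are used. Once the restriction fact $\chi_B\cdot h\in\class{Gap}\cdot\C$ is in place, every remaining step is a direct translation of the $\class{GapP}$ arguments already developed in Section~\ref{sec:preliminaries}.
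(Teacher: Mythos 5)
Your proposal is correct and follows essentially the same route as the paper: both identify the relevant operator $R_x = P_x^{\ast}\bigl(\sum_{u\in\Sigma^k\cap B}\ket{u}\bra{u}\bigr)$, obtain $\class{Gap}\cdot\C$ representations of its entries by restricting the $\class{GapP}$ circuit-amplitude functions to $B$ (using the closure hypotheses on $\C$), and then separate $\lambda_{\text{max}}\geq 2/3$ from $\lambda_{\text{max}}\leq 1/3$ via the trace-power method and Proposition~\ref{prop:Gap.C versus P.C}. The only differences are cosmetic (exponent $2n$ versus $n+1$, and phrasing the $B$-restriction via a truth-table rather than a Karp reduction).
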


\begin{proof}
  Let $A = (A_{\text{yes}},A_{\text{no}})\in\class{QMA}\cdot\C$, and let
  $\{P_x\,:\,x\in\Sigma^{\ast}\}$ be a
  polynomial-time generated family of quantum circuits that, together with a
  language $B\in\C$, establishes this inclusion according to
  Definition~\ref{definition:QMA.C}.

  By Lemma~\ref{lemma:circuit-GapP} there exists a polynomially bounded
  function $r$ and $\class{GapP}$ functions $f_0$ and $f_1$ such that
  \begin{equation}
    \begin{aligned}
      \op{Re}\bigl(\bra{u} P_x \bigl( \ket{z}\bra{w} \bigr) \ket{v}\bigr)
      & = 2^{-r} f(x,z,w,u,v),\\
      \op{Im}\bigl(\bra{u} P_x \bigl( \ket{z}\bra{w} \bigr) \ket{v}\bigr)
      & = 2^{-r} g(x,z,w,u,v),
    \end{aligned}
  \end{equation}
  for all $x\in\Sigma^{\ast}$, $z,w\in\Sigma^n$, and $u,v\in\Sigma^k$.
  Define
  \begin{equation}
    \begin{aligned}
      g_0(x,z,w,u) & =
      \begin{cases}
        f_0(x,z,w,u,u) & \text{if $u\in B$}\\
        0 & \text{if $u\not\in B$},
      \end{cases}\\[2mm]
      g_1(x,z,w,u) & =
      \begin{cases}
        f_1(x,z,w,u,u) & \text{if $u\in B$}\\
        0 & \text{if $u\not\in B$},
      \end{cases}
    \end{aligned}
  \end{equation}
  for all $x\in\Sigma^{\ast}$, $z,w\in\Sigma^n$, and $u\in\Sigma^k$.
  By the nontriviality and closure of $\C$ under Karp reductions
  (the full power of closure under joins and truth-table reductions is not
  required for this step), it is the case that $g_0,g_1\in\class{Gap}\cdot\C$.

  Next, define
  \begin{equation}
    \begin{aligned}
      F_0(x,z,w) & = \sum_{u\in\Sigma^k} g_0(x,z,w,u),\\
      F_1(x,z,w) & = -\sum_{u\in\Sigma^k} g_1(x,z,w,u),
    \end{aligned}
  \end{equation}
  for all $x\in\Sigma^{\ast}$ and $z,w\in\Sigma^n$.
  By Lemma~\ref{lemma:Gap.C closed under exponential sums} we have that
  $F_0,F_1\in\class{Gap}\cdot\C$.
  We observe that
  \begin{equation}
    \begin{aligned}
      \op{Re} \bigl( \bra{z} R_x \ket{w} \bigr) & = 2^{-r} F_0(x,z,w),\\
      \op{Im} \bigl( \bra{z} R_x \ket{w} \bigr) & = 2^{-r} F_1(x,z,w)
    \end{aligned}
  \end{equation}
  for all $x\in\Sigma^{\ast}$ and $z,w\in\Sigma^n$, where
  \begin{equation}
    R_x = \sum_{u\in\Sigma^k\cap B} P_x^{\ast} \bigl( \ket{u}\bra{u} \bigr).
  \end{equation}

  Now let us consider the cases $x\in A_{\text{yes}}$ and
  $x\in A_{\text{no}}$.
  If $x\in A_{\text{yes}}$ then $\lambda_{\text{max}}(R_x) \geq 2/3$,
  while if $x\in A_{\text{no}}$ then $\lambda_{\text{max}}(R_x) \leq 1/3$.
  Observing that $R_x$ is a positive semidefinite operator on a $2^n$
  dimensional space, we have that
  \begin{equation}
    \lambda_{\text{max}}(R_x)^{n+1}
    = \lambda_{\text{max}}(R_x^{n+1})
    \leq \tr(R_x^{n+1})
    \leq 2^n \lambda_{\text{max}}(R_x^{n+1})
    = 2^n \lambda_{\text{max}}(R_x)^{n+1},
  \end{equation}
  similar to equation \eqref{eq:max-eigenvalue versus trace} in the proof of
  Lemma~\ref{lemma:PP-eigenvalue-estimator}.
  By Lemma~\ref{lemma:Gap.C matrix multiplication} it follows that there
  exists a $\class{Gap}\cdot\C$ function $G$ possessing the following
  properties.
  \begin{mylist}{8mm}
  \item[1.]
    If $x\in A_{\text{yes}}$ then
    \begin{equation}
      G(x) = 2^{(n+1)r} \op{tr}(R_x^{n+1}) \geq \frac{2^{(n+1)r+n+1}}{3^{n+1}}
    \end{equation}
  \item[2.]
    If $x\in A_{\text{no}}$ then
    \begin{equation}
      G(x) = 2^{(n+1)r} \op{tr}(R_x^{n+1}) \leq \frac{2^{(n+1)r + n}}{3^{n+1}}.
    \end{equation}
  \end{mylist}
  The $\class{Gap}\cdot\C$ function
  \begin{equation}
    H(x) = 3^{n+1} G(x) - 2^{(n+1)r + n}
  \end{equation}
  therefore satisfies $H(x) > 0$ when $x\in A_{\text{yes}}$ and
  $H(x) \leq 0$ when $x\in A_{\text{no}}$.
  By Proposition~\ref{prop:Gap.C versus P.C} it follows that
  $A\in\class{P}\cdot\C$.  
\end{proof}

Next, we prove that $\class{MQRG}(1)$ is contained in
$\class{QMA}\cdot\class{PP}$.
Combining this fact with the previous theorem will establish the main result
as an immediate corollary.

\begin{theorem}
  \label{theorem:MQRG(1) in QMA.PP}
  $\class{MQRG}(1) \subseteq \class{QMA}\cdot\class{PP}$.
\end{theorem}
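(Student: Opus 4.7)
The plan is to use $N$ supposed copies of Alice's optimal $\class{MQRG}(1)$ message as her $\class{QMA}$ proof. Concretely, the verifier circuit partitions the $nN$-qubit proof into $N$ blocks of $n$ qubits and applies the $\class{MQRG}(1)$ inner circuit $P_x$ independently to each block, producing $kN$ output qubits that, under standard-basis measurement, yield strings $y_1,\ldots,y_N\in\Sigma^k$; the $\class{PP}$ language $B$ is then the one furnished by Lemma~\ref{lemma:PP-eigenvalue-estimator} applied to $Q_x$, which tests whether the empirical operator $(S_{x,y_1}+\cdots+S_{x,y_N})/N$ has minimum eigenvalue at least $2/3$, where $S_{x,y}=(\bra{y}\otimes\I_{\B})Q_x^{\ast}(\ket{1}\bra{1})(\ket{y}\otimes\I_{\B})$ exactly as in the proof of Theorem~\ref{theorem:CQRG(1) in NP.PP}. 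I would take $N=72(m+2)$ and, as a preliminary step, amplify the referee via parallel repetition and majority vote so that $\omega(R_x)\geq 1-2^{-p}$ on yes-instances and $\omega(R_x)\leq 2^{-p}$ on no-instances, for a polynomially bounded $p$ to be chosen sufficiently large.

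For completeness, the honest Alice on $x\in A_{\text{yes}}$ plays $\rho^{\otimes N}$, where $\rho\in\Density(\A)$ achieves $\lambda_{\text{min}}\bigl(\sum_y\bra{y}P_x(\rho)\ket{y}\,S_{x,y}\bigr)\geq 1-2^{-p}$. The outcomes $y_1,\ldots,y_N$ are then i.i.d.\ samples from $y\mapsto\bra{y}P_x(\rho)\ket{y}$, so Lemma~\ref{lemma:flat-distribution-quality} guarantees that with probability at least $2/3$ the empirical operator has minimum eigenvalue at least $1-2^{-p}-1/12\geq 2/3$; by Lemma~\ref{lemma:PP-eigenvalue-estimator} the $\class{PP}$ predicate accepts with the required probability.

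The main obstacle is soundness: if Alice entangles her proof across the $N$ blocks, the outcomes $y_i$ are neither independent nor identically distributed, and Lemma~\ref{lemma:flat-distribution-quality} is unavailable. My plan is to bypass concentration altogether by analyzing a single scalar quadratic form along an adversarially chosen direction. For any proof $\rho$ on $nN$ qubits, let $\rho_i$ denote the marginal of $\rho$ on the $i$-th $n$-qubit block, set $\bar\rho=\frac{1}{N}\sum_{i=1}^N\rho_i\in\Density(\A)$, and let $\ket{v^{\ast}}\in\B$ be a minimum eigenvector of $M:=\sum_y\bra{y}P_x(\bar\rho)\ket{y}\,S_{x,y}$. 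Since $\bar\rho$ is itself a legal single-turn Alice strategy, the no-instance guarantee forces $\lambda_{\text{min}}(M)\leq 2^{-p}$. A direct computation, using that $\op{Pr}[y_i=y]=\bra{y}P_x(\rho_i)\ket{y}$, yields
\begin{equation}
  \op{E}\!\left[\bra{v^{\ast}}\frac{S_{x,y_1}+\cdots+S_{x,y_N}}{N}\ket{v^{\ast}}\right]
  = \bra{v^{\ast}}M\ket{v^{\ast}} = \lambda_{\text{min}}(M) \leq 2^{-p},
\end{equation}
and since $\lambda_{\text{min}}(A)\leq\bra{v^{\ast}}A\ket{v^{\ast}}$ for every Hermitian $A$, Markov's inequality gives
\begin{equation}
  \op{Pr}\!\left[\lambda_{\text{min}}\!\left(\frac{S_{x,y_1}+\cdots+S_{x,y_N}}{N}\right)>\tfrac{1}{3}\right]
  \leq 3\cdot 2^{-p},
\end{equation}
which is at most $1/3$ once $p$ is sufficiently large. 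By Lemma~\ref{lemma:PP-eigenvalue-estimator} the $\class{PP}$ predicate then rejects with probability at least $2/3$, establishing soundness and the claimed containment.
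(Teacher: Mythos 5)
Your proof is correct, and the completeness half is essentially the paper's argument. For soundness, however, you take a genuinely different route. The paper fixes a Bob state $\sigma\in\Density(\B)$ that, by Sion's min-max theorem, satisfies $\sum_y\bra{y}P_x(\rho)\ket{y}\tr(S_{x,y}\sigma)\leq 1/4$ uniformly over all $\rho\in\Density(\A)$; it then defines the scalar random variables $Z_j=\tr(S_{x,y_j}\sigma)$, observes that they have conditional expectations bounded by $1/4$ even when Alice's proof is entangled across blocks, and invokes an Azuma-type variant of Hoeffding's inequality (proved in the paper's appendix) to control the deviation. Your approach sidesteps the martingale machinery entirely: you pass to the averaged marginal $\bar\rho$, pick $v^{\ast}$ to be the minimum eigenvector of the corresponding expected operator $M$, observe that $\op{E}\bigl[\bra{v^{\ast}}\tfrac{1}{N}\sum_j S_{x,y_j}\ket{v^{\ast}}\bigr]=\lambda_{\text{min}}(M)$ by linearity of expectation (which needs only the marginals of the $y_j$, not independence), and apply Markov's inequality. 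The trade-off is that Markov gives a weak tail, so you compensate by first amplifying the soundness parameter to $2^{-p}$ via the parallel-repetition/majority-vote amplification that the paper asserts for $\class{MQRG}(1)$; the paper avoids that amplification and works directly at soundness $1/4$ by using the stronger concentration inequality. Both arguments are valid, and yours is arguably more elementary, trading the appendix lemma for a stronger (but already available) amplification assumption.

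Two small points of care that you handled correctly and are worth making explicit: first, $v^{\ast}$ depends on Alice's chosen proof $\rho$ but is deterministic once $\rho$ is fixed, which is all that Markov's inequality requires; second, $\lambda_{\text{min}}(A)>1/3$ implies $\bra{v^{\ast}}A\ket{v^{\ast}}>1/3$ for any unit $v^{\ast}$, so your Markov bound on the quadratic form does translate into the needed bound on the minimum eigenvalue, even though Lemma~\ref{lemma:PP-eigenvalue-estimator} says nothing about the intermediate regime $(1/3,2/3)$.
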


\begin{proof}
  Consider any promise problem $A = (A_{\text{yes}}, A_{\text{no}})$ in
  $\class{MQRG(1)}$, and fix a referee that establishes the
  inclusion $A \in \class{MQRG(1)}_{3/4,1/4}$.
  Let $\{P_x : x\in\Sigma^{\ast}\}$ and $\{Q_{x} : x\in\Sigma^{\ast}\}$
  be a collection of circuits that describe this referee, in accordance with
  Definition~\ref{definition:MQRG(1)}.
  As in the proof of Theorem~\ref{theorem:CQRG(1) in NP.PP}, define an
  operator
  \begin{equation}
    S_{x,y} = \bigl(\bra{y} \otimes \I_{\B}\bigr)
    Q_x^{\ast} ( \ket{1}\bra{1} )
    \bigl(\ket{y} \otimes \I_{\B}\bigr)
  \end{equation}
  for each $x\in\Sigma^{\ast}$ and $y\in\Sigma^k$.
  If $x\in A_{\text{yes}}$, there must exists a state $\rho\in\Density(\A)$
  such that
  \begin{equation}
    \label{eq:MQRG-Alice-wins}
    \lambda_{\text{min}}
    \Biggl(
    \sum_{y\in\Sigma^k}
    \bra{y}P_x(\rho)\ket{y} S_{x,y}\Biggr)\geq \frac{3}{4},
  \end{equation}
  while if $x\in A_{\text{no}}$, it is the case that
  \begin{equation}
    \label{eq:MQRG-Bob-wins}
    \lambda_{\text{min}}
    \Biggl(
    \sum_{y\in\Sigma^k}
    \bra{y}P_x(\rho)\ket{y} S_{x,y}\Biggr) \leq \frac{1}{4}
  \end{equation}
  for every $\rho\in\Density(\A)$.

  Now define a function $N = 72(m+2)$ and observe that $N$ is polynomially
  bounded in $\abs{x}$.
  By Lemma~\ref{lemma:PP-eigenvalue-estimator}, there exists a language
  $B\in\class{PP}$ for which these implications hold for all
  $x\in\Sigma^{\ast}$ and $y_1,\ldots,y_N\in\Sigma^k$:
  \begin{align}    
    \lambda_{\text{min}}\biggl(\frac{S_{x,y_1} + \cdots + S_{x,y_N}}{N}\biggr)
    \geq \frac{2}{3}
    & \quad \Rightarrow \quad (x,y_1\cdots y_N) \in B,\\[2mm]
    \lambda_{\text{min}}\biggl(\frac{S_{x,y_1} + \cdots + S_{x,y_N}}{N}\biggr)
    \leq \frac{1}{3}
    & \quad \Rightarrow \quad (x,y_1\cdots y_N) \not\in B.
    \label{eq:MQRG(1) in QMA.PP no-implication}
  \end{align}

  Finally, for each input $x$, define a circuit $K_x$ that takes as input
  $N$ registers $(\reg{A}_1,\ldots,\reg{A}_N)$, each consisting of $n$ qubits,
  and outputs $N+1$ registers $(\reg{X},\reg{Y}_1,\ldots,\reg{Y}_N)$.
  The register $\reg{X}$ is initialized to the state $\ket{x}\bra{x}$, so that
  it simply echoes the input string $x$, and each register $\reg{Y}_j$ is
  obtained by independently applying the circuit $P_x$ to $\reg{A}_j$.
  Alternatively, one could write
  \begin{equation}
    K_x = \ket{x}\bra{x} \otimes P_x^{\otimes N},
  \end{equation}
  with the understanding that we are identifying the state
  $\ket{x}\bra{x}$ with the channel that inputs nothing and outputs the state
  $\ket{x}\bra{x}$.
  
  To prove that the promise problem $A$ is contained in
  $\class{QMA}\cdot\class{PP}$, it suffices to prove two things:
  \begin{trivlist}
  \item \emph{Completeness.}
    If it is the case that $x\in A_{\text{yes}}$, then there must exist a
    state $\xi\in\Density(\A^{\otimes N})$ such that
    \begin{equation}
      \op{Pr}(K_x(\xi)\in B) \geq \frac{2}{3}.
    \end{equation}
  \item \emph{Soundness.}
    If it is the case that $x\in A_{\text{no}}$, then for every state
    $\xi\in\Density(\A^{\otimes N})$ it must be that
    \begin{equation}
      \op{Pr}(K_x(\xi)\in B) \leq \frac{1}{3}.
    \end{equation}
  \end{trivlist}
  
  The proof of completeness follows a similar argument to the proof of
  Theorem~\ref{theorem:CQRG(1) in NP.PP}.
  Let $\rho\in\Density(\A)$ be any state for which \eqref{eq:MQRG-Alice-wins}
  is satisfied, and let $\xi = \rho^{\otimes N}$.
  It is evident that the output of $K_x(\xi)$ is given by $(x,y_1\cdots y_N)$,
  for $y_1,\ldots,y_N\in\Sigma^k$ sampled independently from the distribution
  \begin{equation}
    p(y) = \bra{y}P_x(\rho)\ket{y}.
  \end{equation}
  It follows by Lemma~\ref{lemma:flat-distribution-quality} that
  \begin{equation}
    \op{Pr}(K_x(\xi)\in B) \geq \frac{2}{3}.
  \end{equation}
  
  For the proof of soundness, the possibility that the state
  $\xi\in\Density(\A^{\otimes N})$ does not take product form must be
  considered.
  Our aim is to prove that if $y_1,\ldots,y_N$ are randomly selected according
  to the distribution that assigns the probability
  \begin{equation}
    \bigbra{y_1 \cdots y_N} P_x^{\otimes N} ( \xi) \bigket{y_1\cdots y_N}
  \end{equation}
  to each tuple $(y_1,\ldots,y_N)$, then
  \begin{equation}
    \op{Pr}\Biggl(
    \lambda_{\text{min}} \Biggl( \frac{S_{x,y_1} + \cdots + S_{x,y_N}}{N}
    \Biggr) \leq \frac{1}{3}\Biggr) \geq \frac{2}{3}\,,
  \end{equation}
  for this implies that $\op{Pr}(K_x(\xi)\in B) \leq 1/3$ by
  \eqref{eq:MQRG(1) in QMA.PP no-implication}.
  Toward this goal, choose a density operator $\sigma\in\Density(\B)$ for which
  \begin{equation}
    \label{eq:sigma bound}
    \sum_{y\in\Sigma^k}\bra{y}P_x(\rho)\ket{y} \tr\bigl( S_{x,y}\tsp
    \sigma\bigr) \leq \frac{1}{4}
  \end{equation}
  for all $\rho\in\Density(\A)$, which is possible by Sion's min-max theorem
  under the assumption \eqref{eq:MQRG-Bob-wins}, and define random variables
  $Z_1,\ldots,Z_N$ as
  \begin{equation}
    Z_j = \tr \bigl( S_{x,y_j} \tsp \sigma\bigr)
  \end{equation}
  for every $j\in\{1,\ldots,N\}$, assuming that $y_1,\ldots,y_N$ are chosen at
  random as above.
  It suffices to prove that
  \begin{equation}
    \label{eq:sum of Z random variables}
    \op{Pr}\Biggl(
    \frac{Z_1 + \cdots + Z_N}{N} 
      \leq \frac{1}{3}\Biggr) \geq \frac{2}{3}\,,
  \end{equation}
  as we have $\lambda_{\text{min}}(H) \leq \tr( H \tsp \sigma)$ for all
  Hermitian operators $H$.

  The complication we face at this point is that the random variables
  $Z_1,\ldots,Z_N$ are not necessarily independent (because $\xi$ does not
  necessarily have product form), so the most standard form of Hoeffding's
  inequality will not suffice to establish the required bound
  \eqref{eq:sum of Z random variables}.
  However, we observe that $Z_1,\ldots,Z_N$ are discrete random variables
  that take values in the interval $[0,1]$ and satisfy the inequality
  \begin{equation}
    \op{E}(Z_j | Z_1 = \alpha_1,\ldots,Z_{j-1} = \alpha_{j-1}) \leq 
    \frac{1}{4}
  \end{equation}
  for all $j\in\{2,\ldots,N\}$ and
  $\alpha_1,\ldots,\alpha_{j-1} \in [0,1]$ for which
  $\op{Pr}(Z_1 = \alpha_1,\ldots,Z_{j-1} = \alpha_{j-1})$ is nonzero.
  This is evident from the inequality \eqref{eq:sigma bound}, for it must hold
  when $\rho$ is equal to the reduced state of register $\reg{A}_j$,
  conditioned on any choice of $y_1,\ldots,y_{j-1}$ (and therefore on any
  choice of values $Z_1 = \alpha_1,\ldots,Z_{j-1} = \alpha_{j-1}$) that appear
  with nonzero probability.
  While the standard statement of Hoeffding's inequality does not suffice
  for our needs, the standard \emph{proof} of Hoeffding's inequality
  does establish that
  \begin{equation}
    \op{Pr}\Biggl(\frac{Z_1 + \cdots + Z_N}{N} \geq \frac{1}{3}\Biggr)
    =
    \op{Pr}\Biggl(\frac{Z_1 + \cdots + Z_N}{N} \geq \frac{1}{4}
    + \frac{1}{12}\Biggr)
    \leq
    \exp\biggl(-\frac{2N}{144}\biggr) < \frac{1}{3},
  \end{equation}
  as explained in an appendix at the end of the paper.
  Having obtained this bound, the proof is complete.
\end{proof}

\begin{cor}
  $\class{MQRG}(1) \subseteq \class{P}\cdot\class{PP}$.
\end{cor}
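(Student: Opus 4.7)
The plan is to obtain this corollary by simply chaining the two main structural results established earlier in this section, namely Theorem~\ref{theorem:MQRG(1) in QMA.PP} ($\class{MQRG}(1)\subseteq\class{QMA}\cdot\class{PP}$) and Theorem~\ref{theorem:QMA.C-in-PP.C} (if $\C$ satisfies suitable closure hypotheses, then $\class{QMA}\cdot\C\subseteq\class{P}\cdot\C$). Thus the proof reduces to verifying that $\C=\class{PP}$ is a valid instantiation of the second theorem, after which the chain of inclusions
\begin{equation}
  \class{MQRG}(1)\;\subseteq\;\class{QMA}\cdot\class{PP}\;\subseteq\;\class{P}\cdot\class{PP}
\end{equation}
yields the corollary.

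First I would observe that $\class{PP}$ is a nontrivial class of languages, since it obviously contains $\class{P}$ and hence contains many languages that are neither empty nor equal to $\Sigma^{\ast}$. Next I would note that $\class{PP}$ is trivially closed under the join operation, since a language in $\class{PP}$ is defined by a polynomial-time predicate counting accepting witnesses, and joining two such languages just amounts to reading a single leading bit to decide which of the two PP machines to simulate. Finally I would invoke the theorem of Fortnow and Reingold \cite{FortnowR1996} (already cited in the discussion preceding Lemma~\ref{lemma:Gap.C closed under exponential sums}) which establishes the closure of $\class{PP}$ under polynomial-time truth-table reductions; this is the only nontrivial closure property being used here.

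With these three hypotheses verified, Theorem~\ref{theorem:QMA.C-in-PP.C} applies with $\C=\class{PP}$ and yields $\class{QMA}\cdot\class{PP}\subseteq\class{P}\cdot\class{PP}$. Composing this with the containment from Theorem~\ref{theorem:MQRG(1) in QMA.PP} completes the proof. There is essentially no obstacle here: both halves of the chain have been handled in the two preceding theorems, and all of the technical work (the Tropp tail bound to flatten Alice's optimal strategy, the counting-complexity machinery for estimating minimum eigenvalues via traces of powers, and the extension of known $\class{PP}$ closure properties to the $\class{Gap}\cdot\C$ setting) has already been done, so the corollary is genuinely a direct consequence.
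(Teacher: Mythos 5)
Your proof is correct and is exactly the paper's intended argument: the corollary is stated immediately after Theorem~\ref{theorem:MQRG(1) in QMA.PP} as the composition of that theorem with Theorem~\ref{theorem:QMA.C-in-PP.C} instantiated at $\C=\class{PP}$. Your explicit verification that $\class{PP}$ is nontrivial and closed under joins and truth-table reductions (via Fortnow--Reingold) matches the discussion the paper already gives in the counting-complexity preliminaries.
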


\section{Conclusion}

We have proved containments on two restricted versions of $\class{QRG(1)}$,
which we have called $\class{CQRG(1)}$ and $\class{MQRG(1)}$.
An obvious challenge is to prove a stronger containment on the class
$\class{QRG(1)}$ than $\class{PSPACE}$.
Observing that the containments we prove establish that
$\class{CQRG(1)}$ and $\class{MQRG(1)}$ are contained in the counting
hierarchy, we wonder whether $\class{QRG(1)}$ is also contained in the
counting hierarchy.

\appendix

\section{Hoeffding's inequality for dependent random variables with
  bounded conditional expectation}

In the proof of Theorem~\ref{theorem:MQRG(1) in QMA.PP} we used a slight
variant of Hoeffding's inequality, where the assumption of independence is
replaced by a bound on conditional expectation.
We expect that a bound along these lines has been observed before, but
we have not found a suitable reference.
(A similar bound is proved in \cite{BabaiCFLS1995} for Bernoulli random
variables, but we require the bound to hold more generally for discrete random
variables.)

It is, however, straightforward to adapt the most typical proof of
Hoeffding's inequality to obtain this bound, as we now explain.
We begin with Hoeffding's lemma, which is the essential ingredient in the
proof, and which we state without proof.
(A proof may be found in \cite{BhattacharyaW2016}, among many other
references.)

\begin{lemma}[Hoeffding's lemma]
  Let $X$ be a random variable taking values in $[\alpha,\beta]$,
  for real numbers $\alpha < \beta$, and assume $\op{E}(X) \leq 0$.
  For every $\lambda > 0$ it is the case that
  \begin{equation}
    \op{E}\bigl(\exp(\lambda X)\bigr)
    \leq \exp\biggl(\frac{\lambda^2}{8(\beta - \alpha)^2}\biggr).
  \end{equation}
\end{lemma}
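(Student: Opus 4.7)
The plan is to follow the textbook proof of Hoeffding's lemma, which proceeds by a convexity step followed by a Taylor-type estimate. First, I would reduce to the case $\alpha \leq 0 \leq \beta$: the hypothesis $\op{E}(X) \leq 0$ combined with $X \geq \alpha$ forces $\alpha \leq 0$, while the case $\beta < 0$ is trivial since then $\exp(\lambda X) \leq 1$ almost surely and the stated inequality follows by inspection.

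Second, convexity of $t \mapsto \exp(\lambda t)$ on $[\alpha,\beta]$ gives the pointwise bound
\[
\exp(\lambda x) \;\leq\; \frac{\beta - x}{\beta - \alpha}\exp(\lambda\alpha) + \frac{x - \alpha}{\beta - \alpha}\exp(\lambda\beta)
\]
for every $x\in[\alpha,\beta]$. Taking expectations produces an upper bound on $\op{E}(\exp(\lambda X))$ that is an affine function of $\op{E}(X)$ with positive slope $(\exp(\lambda\beta)-\exp(\lambda\alpha))/(\beta-\alpha)$; the hypothesis $\op{E}(X) \leq 0$ therefore lets me replace $\op{E}(X)$ by $0$, yielding
\[
\op{E}(\exp(\lambda X)) \;\leq\; \frac{\beta}{\beta-\alpha}\exp(\lambda\alpha) + \frac{-\alpha}{\beta-\alpha}\exp(\lambda\beta).
\]

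Third, setting $p = -\alpha/(\beta-\alpha) \in [0,1]$ and $u = \lambda(\beta-\alpha) > 0$, I would factor $\exp(\lambda\alpha)$ from both terms and write the right-hand side as $\exp(\varphi(u))$ with
\[
\varphi(u) \;=\; -pu + \log\bigl(1 - p + p\exp(u)\bigr).
\]
A direct differentiation gives $\varphi(0) = 0$, $\varphi'(0) = 0$, and $\varphi''(u) = q(1-q)$ with $q = p\exp(u)/(1 - p + p\exp(u)) \in [0,1]$, so $\varphi''(u) \leq 1/4$ uniformly in $u$. Taylor's theorem with remainder then gives $\varphi(u) \leq u^2/8$, which in the variables $\lambda$ and $\beta-\alpha$ is the bound asserted by the lemma.

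The only calculations involved are the first and second derivatives of $\varphi$ together with the elementary inequality $q(1-q)\leq 1/4$ on $[0,1]$; neither is a real obstacle. The one subtle point I would want to get right is the monotonicity step that justifies substituting $\op{E}(X) = 0$ into the convexity upper bound, but this reduces to observing that the slope coefficient $(\exp(\lambda\beta) - \exp(\lambda\alpha))/(\beta-\alpha)$ is positive whenever $\lambda > 0$ and $\alpha < \beta$, which is immediate.
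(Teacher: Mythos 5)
The paper states this lemma without proof, citing Bhattacharya and Waymire, so there is no internal argument to compare against; your proposal supplies the standard textbook proof, and it is essentially correct. The reduction to $\alpha \leq 0 \leq \beta$ is handled properly (indeed $\alpha \leq \op{E}(X) \leq 0$, and the case $\beta < 0$ is trivial), the convexity bound and the monotonicity step replacing $\op{E}(X)$ by $0$ are sound, and the computation $\varphi(0)=\varphi'(0)=0$, $\varphi''(u) = q(1-q) \leq 1/4$ with Taylor's theorem correctly yields $\varphi(u) \leq u^2/8$.

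One caveat: with $u = \lambda(\beta-\alpha)$, what you have proved is
\begin{equation}
  \op{E}\bigl(\exp(\lambda X)\bigr) \leq \exp\biggl(\frac{\lambda^2(\beta-\alpha)^2}{8}\biggr),
\end{equation}
which is the standard (and correct) form of Hoeffding's lemma, whereas the display in the statement above has $(\beta-\alpha)^2$ in the denominator. As literally printed, that statement is false in general (take $X = \pm M$ with equal probability and $M$ large: the left side grows like $\exp(\lambda M)$ while the printed bound tends to $1$), so the exponent in the paper's display must be read as a typo. Your closing sentence, asserting that $u^2/8$ ``is the bound asserted by the lemma,'' silently performs this correction; it would be better to say so explicitly. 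The discrepancy is harmless for the paper's purposes, since the lemma is only applied to random variables with $\beta - \alpha = 1$, where the two expressions coincide and the final bound $\exp(-2n\varepsilon^2)$ is unaffected.
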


\begin{remark}
  The more typical assumption for this lemma is that $\op{E}(X) = 0$,
  but (as is not surprising) it is true assuming instead that
  $\op{E}(X) \leq 0$.
  This follows immediately from the observation that if $\op{E}(X) \leq 0$,
  then
  \begin{equation}
    \op{E}(\exp(\lambda X)) \leq \op{E}(\exp(\lambda(X - \op{E}(X)))).
  \end{equation}
\end{remark}

The next lemma provides the inequality in the proof of Hoeffding's inequality
that would ordinarily follow from the assumption of independence.
For simplicity we prove this lemma for discrete random variables, which
suffices for our needs.

\begin{lemma}
  \label{lemma:conditional Hoeffding}
  Let $X$ and $Y$ be discrete random variables taking values in
  $[\alpha,\beta]$ for real numbers $\alpha < \beta$, and assume that
  $\op{E}(Y \,|\, X) \leq 0$.
  For every $\lambda > 0$ it is the case that
  \begin{equation}
    \op{E}(\exp(\lambda (X + Y))
    \leq
    \exp\biggl(\frac{\lambda^2}{8(\beta - \alpha)^2}\biggr)
    \op{E}(\exp(\lambda X)).
  \end{equation}  
\end{lemma}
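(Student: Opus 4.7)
The plan is to derive the inequality by conditioning on $X$ and then applying the preceding (``$\op{E}(X) \leq 0$'' version of) Hoeffding's lemma pointwise to the conditional distribution of $Y$. No new ingredient beyond Hoeffding's lemma and the tower property of conditional expectation is needed.

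First I would use the tower property together with the fact that $\exp(\lambda X)$ is a deterministic function of $X$ to write
\begin{equation*}
  \op{E}\bigl(\exp(\lambda (X + Y))\bigr)
  = \op{E}\bigl(\exp(\lambda X) \cdot \op{E}(\exp(\lambda Y) \,|\, X)\bigr).
\end{equation*}
Because $X$ and $Y$ are discrete, for every value $x$ attained by $X$ with positive probability the conditional distribution of $Y$ given $X = x$ is itself a well-defined discrete distribution supported in $[\alpha,\beta]$, with mean at most $0$ by the assumption $\op{E}(Y \,|\, X) \leq 0$.

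Next, I would apply Hoeffding's lemma (in the form stated just above, allowing mean $\leq 0$) to this conditional distribution to obtain
\begin{equation*}
  \op{E}\bigl(\exp(\lambda Y) \,\big|\, X = x\bigr)
  \leq \exp\biggl(\frac{\lambda^2}{8(\beta - \alpha)^2}\biggr)
\end{equation*}
for every such $x$. Substituting this deterministic upper bound on $\op{E}(\exp(\lambda Y) \,|\, X)$ back into the tower-property identity pulls the constant outside the expectation and yields
\begin{equation*}
  \op{E}\bigl(\exp(\lambda (X + Y))\bigr)
  \leq \exp\biggl(\frac{\lambda^2}{8(\beta - \alpha)^2}\biggr) \op{E}\bigl(\exp(\lambda X)\bigr),
\end{equation*}
which is the required bound.

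There is not really a ``hard part'' here; the only thing to be careful about is the justification that Hoeffding's lemma can be invoked conditionally, which in the discrete setting reduces to the observation that the conditional law of $Y$ given $X = x$ is an ordinary probability distribution on $[\alpha,\beta]$ whose hypotheses match those of Hoeffding's lemma exactly. This is why the lemma is stated and proved only for discrete random variables, which is sufficient for the application in Theorem~\ref{theorem:MQRG(1) in QMA.PP}.
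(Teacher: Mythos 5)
Your proposal is correct and follows the same argument as the paper's proof: condition on $X$, apply Hoeffding's lemma to the conditional distribution of $Y$ given $X = x$, and pull the resulting constant out of the expectation. The paper writes the tower-property step explicitly as a sum over the possible values of $X$, but this is just notation; the substance is identical.
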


\begin{proof}
  We may write
  \begin{equation}
    \op{E}(\exp(\lambda (X + Y))
    = \sum_x \exp(\lambda x) \op{E}(\exp(\lambda Y) \,|\, X = x)
    \op{Pr}(X = x),
  \end{equation}
  where the sum ranges over all possible values of $X$.
  By the assumption $\op{E}(Y \,|\, X) \leq 0$, Hoeffding's lemma implies
  \begin{equation}
    \begin{multlined}
      \sum_x \exp(\lambda x) \op{E}(\exp(\lambda Y) \,|\, X = x)
      \op{Pr}(X = x)\\
      \leq
      \exp\biggl(\frac{\lambda^2}{8(\beta - \alpha)^2}\biggr)
      \sum_x \exp(\lambda x) \op{Pr}(X = x)
      =
      \exp\biggl(\frac{\lambda^2}{8(\beta - \alpha)^2}\biggr)
      \op{E}(\exp(\lambda X)),
    \end{multlined}
  \end{equation}
  as required.
\end{proof}

Finally, we state and prove the variant of Hoeffding's inequality we have
used (again for discrete random variables).

\begin{theorem}
  Let $X_1,\ldots,X_n$ be discrete random variables taking values in $[0,1]$,
  let $\gamma\in[0,1]$, and assume that
  \begin{equation}
    \op{E}(X_k \,|\, X_1,\ldots,X_{k-1}) \leq \gamma
  \end{equation}
  for all $k\in\{1,\ldots,n\}$.
  For all $\varepsilon > 0$ it is the case that
  \begin{equation}
    \op{Pr}\bigl(X_1+\cdots+X_n \geq (\gamma + \varepsilon)n\bigr)
    \leq \exp(-2n\varepsilon^2).
  \end{equation}
\end{theorem}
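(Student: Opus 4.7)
The plan is to follow the standard Chernoff--Cram\'er style argument used in the usual proof of Hoeffding's inequality, but with the previous \emph{Lemma~\ref{lemma:conditional Hoeffding}} (conditional Hoeffding) in place of the independence-based factorization $\operatorname{E}(\exp(\lambda(X+Y))) = \operatorname{E}(\exp(\lambda X))\operatorname{E}(\exp(\lambda Y))$ that independence would provide.

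First, I would reduce to the mean-zero case by centering: set $Y_k = X_k - \gamma$, so that each $Y_k$ is a discrete random variable taking values in $[-\gamma,\,1-\gamma]\subseteq[\alpha,\beta]$ with $\beta - \alpha = 1$, and the hypothesis becomes
\begin{equation}
\operatorname{E}(Y_k \,|\, Y_1,\ldots,Y_{k-1}) \leq 0.
\end{equation}
The target inequality is equivalent to $\operatorname{Pr}(Y_1 + \cdots + Y_n \geq \varepsilon n) \leq \exp(-2n\varepsilon^2)$. Then, for any $\lambda > 0$, by Markov's inequality applied to $\exp(\lambda \sum_k Y_k)$,
\begin{equation}
\operatorname{Pr}\bigl(Y_1 + \cdots + Y_n \geq \varepsilon n\bigr)
\leq \exp(-\lambda \varepsilon n) \, \operatorname{E}\bigl(\exp(\lambda(Y_1 + \cdots + Y_n))\bigr).
\end{equation}

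Next I would bound the moment generating function iteratively. For each $k\in\{1,\ldots,n\}$, I treat the prefix $(Y_1,\ldots,Y_{k-1})$ as a single discrete random variable (via any injective encoding of tuples into a single value), and apply Lemma~\ref{lemma:conditional Hoeffding} with $X = Y_1 + \cdots + Y_{k-1}$ (or, more precisely, with ``$X$'' in the lemma taken to be this prefix-encoding, after rewriting $\exp(\lambda(Y_1 + \cdots + Y_{k-1}))$ as a function of it) and $Y = Y_k$; since $\operatorname{E}(Y_k \,|\, Y_1,\ldots,Y_{k-1}) \leq 0$ and $\beta - \alpha = 1$,
\begin{equation}
\operatorname{E}\bigl(\exp(\lambda(Y_1 + \cdots + Y_k))\bigr)
\leq \exp\!\left(\frac{\lambda^2}{8}\right)
\operatorname{E}\bigl(\exp(\lambda(Y_1 + \cdots + Y_{k-1}))\bigr).
\end{equation}
Iterating $n$ times yields $\operatorname{E}(\exp(\lambda(Y_1+\cdots+Y_n))) \leq \exp(n\lambda^2/8)$, and hence
\begin{equation}
\operatorname{Pr}\bigl(Y_1 + \cdots + Y_n \geq \varepsilon n\bigr)
\leq \exp\!\left(\frac{n\lambda^2}{8} - \lambda \varepsilon n\right).
\end{equation}

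Finally I would optimize in $\lambda > 0$. The exponent $n\lambda^2/8 - \lambda \varepsilon n$ is minimized at $\lambda = 4\varepsilon$, giving the value $-2n\varepsilon^2$, from which the claimed bound $\exp(-2n\varepsilon^2)$ follows. The only point that might be considered a mild obstacle is the clean invocation of Lemma~\ref{lemma:conditional Hoeffding}, whose statement involves only two random variables $X$ and $Y$; this is handled by encoding the prefix $(Y_1,\ldots,Y_{k-1})$ as a single discrete random variable and recognizing that $\exp(\lambda(Y_1 + \cdots + Y_{k-1}))$ is then a deterministic function of that encoded variable, so that conditioning on it is the same as conditioning on the whole history. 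Everything else is a routine rerun of the classical proof of Hoeffding's inequality.
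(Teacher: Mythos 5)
Your proof is correct and follows essentially the same route as the paper's: Markov's inequality on $\exp(\lambda\sum_k(X_k-\gamma))$, iterated application of the conditional-Hoeffding lemma to bound the moment generating function by $\exp(n\lambda^2/8)$, and optimization at $\lambda = 4\varepsilon$. The only difference is that you explicitly flag the bookkeeping involved in treating the prefix $(Y_1,\ldots,Y_{k-1})$ as the ``$X$'' of the two-variable lemma, a detail the paper leaves implicit behind the phrase ``applying Lemma~\ref{lemma:conditional Hoeffding} iteratively''; this extra care is sound, since the lemma's proof visibly works for any nonnegative function of $X$ in place of $\exp(\lambda X)$ and never uses the range constraint on $X$.
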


\begin{proof}
  For every $\lambda > 0$ we have that
  \begin{equation}
    \begin{aligned}
      \op{Pr}\bigl(X_1+\cdots+X_n \geq (\gamma + \varepsilon)n\bigr)
      \hspace{-4cm} \\[1mm]
      & = \op{Pr}\bigl(\exp\bigl(
      \lambda (X_1 + \cdots + X_n - \gamma n)\bigr) \geq
      \exp(\lambda\varepsilon n)\bigr)\\[1mm]
      & \leq \frac{\op{E}\bigl(
        \exp\bigl(\lambda (X_1 + \cdots + X_n - \gamma n)\bigr)\bigr)}{
        \exp(\lambda\varepsilon n)}
    \end{aligned}
  \end{equation}
  by Markov's inequality.
  Applying Lemma~\ref{lemma:conditional Hoeffding} iteratively yields
  \begin{equation}
    \op{E}\bigl(\exp\bigl(\lambda (X_1 + \cdots + X_n - \gamma n)\bigr)\bigr)
    \leq 
    \exp\biggl(\frac{n\lambda^2}{8}\biggr).
  \end{equation}
  Choosing $\lambda = 4\varepsilon$ yields the claimed bound.
\end{proof}


\bibliographystyle{alpha}

\begin{thebibliography}{BCF{\etalchar{+}}95}

\bibitem[AKN98]{AharonovKN1998}
D.~Aharonov, A.~Kitaev, and N.~Nisan.
\newblock Quantum circuits with mixed states.
\newblock In {\em Proceedings of the 30th Annual ACM Symposium on Theory of
  Computing}, pages 20--30, 1998.

\bibitem[Alt94]{Althofer1994}
I.~Alth\"{o}fer.
\newblock On sparse approximations to randomized strategies and convex
  combinations.
\newblock {\em Linear Algebra and Its Applications}, 199:339--355, 1994.

\bibitem[Bab85]{Babai1985}
L.~Babai.
\newblock Trading group theory for randomness.
\newblock In {\em Proceedings of the 17th Annual ACM Symposium on Theory of
  Computing}, pages 421--429, 1985.

\bibitem[BCF{\etalchar{+}}95]{BabaiCFLS1995}
L.~Babai, G.~Cooperman, L.~Finkelstein, E.~Luks, and \'A. Seress.
\newblock Fast {M}onte {C}arlo algorithms for permutation groups.
\newblock {\em Journal of Computer and System Sciences}, 50:296--307, 1995.

\bibitem[BM88]{BabaiM1988}
L.~Babai and S.~Moran.
\newblock {A}rthur-{M}erlin games: a randomized proof system, and a hierarchy
  of complexity classes.
\newblock {\em Journal of Computer and System Sciences}, 36(2):254--276, 1988.

\bibitem[BRS95]{BeigelRS1995}
R.~Beigel, N.~Reingold, and D.~Spielman.
\newblock {PP} is closed under intersection.
\newblock {\em Journal of Computer and System Sciences}, 50(2):191--202, 1995.

\bibitem[BW16]{BhattacharyaW2016}
R.~Bhattacharya and E.~Waymire.
\newblock {\em A Basic Course in Probability Theory}.
\newblock Springer, second edition, 2016.

\bibitem[Cai07]{Cai2007}
J.-Y. Cai.
\newblock $\text{S}_2^p \subseteq \text{ZPP}^{\textsc{np}}$.
\newblock {\em Journal of Computer and System Sciences}, 73(1):25--35, 2007.

\bibitem[Can96]{Canetti1996}
R.~Canetti.
\newblock More on {BPP} and the polynomial-time hierarchy.
\newblock {\em Information Processing Letters}, 57(5):237--241, 1996.

\bibitem[CFLS95]{CondonFLS1995}
A.~Condon, J.~Feigenbaum, C.~Lund, and P.~Shor.
\newblock Probabilistically checkable debate systems and approximation
  algorithms for {PSPACE}-hard functions.
\newblock {\em Chicago Journal of Theoretical Computer Science}, 1995:4, 1995.

\bibitem[CFLS97]{CondonFLS1997}
A.~Condon, J.~Feigenbaum, C.~Lund, and P.~Shor.
\newblock Random debaters and the hardness of approximating stochastic
  functions.
\newblock {\em SIAM Journal on Computing}, 26(2):369--400, 1997.

\bibitem[CKS81]{ChandraKS1981}
A.~Chandra, D.~Kozen, and L.~Stockmeyer.
\newblock Alternation.
\newblock {\em Journal of the ACM}, 28(1):114--133, 1981.

\bibitem[Con87]{Condon1987}
A.~Condon.
\newblock {\em Computational Models of Games}.
\newblock PhD thesis, University of Washington, 1987.

\bibitem[DH09]{DemaineH09}
E.~Demaine and R.~Hearn.
\newblock Playing games with algorithms: Algorithmic combinatorial game theory.
\newblock In M.~Albert and R.~Nowakowski, editors, {\em Games of No Chance 3},
  pages 3--56. Cambridge University Press, 2009.

\bibitem[FIKU08]{FortnowIKU2008}
L.~Fortnow, R.~Impagliazzo, V.~Kabanets, and C.~Umans.
\newblock On the complexity of succinct zero-sum games.
\newblock {\em Computational Complexity}, 17:353--376, 2008.

\bibitem[FK97]{FeigeK1997}
U.~Feige and J.~Kilian.
\newblock Making games short.
\newblock In {\em Proceedings of the 29th Annual ACM Symposium on Theory of
  Computing}, pages 506--516, 1997.

\bibitem[FKS95]{FeigenbaumKS1995}
J.~Feigenbaum, D.~Koller, and P.~Shor.
\newblock A game-theoretic classification of interactive complexity classes.
\newblock In {\em Proceedings of the 10th Conference on Structure in Complexity
  Theory}, pages 227--237, 1995.

\bibitem[For97]{Fortnow1997}
L.~Fortnow.
\newblock Counting complexity.
\newblock In L.~Hemaspaandra and A.~Selman, editors, {\em Complexity Theory
  Retrospective {I}{I}}, pages 81--107. Springer, 1997.

\bibitem[FR96]{FortnowR1996}
L.~Fortnow and N.~Reingold.
\newblock {PP} is closed under trhuth-table reductions.
\newblock {\em Information and Computation}, 124(1):1--6, 1996.

\bibitem[FR99]{FortnowR1999}
L.~Fortnow and J.~Rogers.
\newblock Complexity limitations on quantum computation.
\newblock {\em Journal of Computer and System Sciences}, 59(2):240--252, 1999.

\bibitem[GMR85]{GoldwasserMR1985}
S.~Goldwasser, S.~Micali, and C.~Rackoff.
\newblock The knowledge complexity of interactive proof systems.
\newblock In {\em Proceedings of the 17th Annual ACM Symposium on Theory of
  Computing}, pages 291--304, 1985.

\bibitem[GMR89]{GoldwasserMR1989}
S.~Goldwasser, S.~Micali, and C.~Rackoff.
\newblock The knowledge complexity of interactive proof systems.
\newblock {\em SIAM Journal on Computing}, 18(1):186--208, 1989.

\bibitem[Gut05]{Gutoski2005}
G.~Gutoski.
\newblock Upper bounds for quantum interactive proofs with competing provers.
\newblock In {\em Proceedings of the 20th Annual IEEE Conference on
  Computational Complexity}, pages 334--343, 2005.

\bibitem[GW05]{GutoskiW2005}
G.~Gutoski and J.~Watrous.
\newblock Quantum interactive proofs with competing provers.
\newblock In {\em Proceedings of the 22nd Symposium on Theoretical Aspects of
  Computer Science}, volume 3404 of {\em Lecture Notes in Computer Science},
  pages 605--616. Springer, 2005.

\bibitem[GW07]{GutoskiW2007}
G.~Gutoski and J.~Watrous.
\newblock Toward a general theory of quantum games.
\newblock In {\em Proceedings of the 39th Annual ACM Symposium on Theory of
  Computing}, pages 565--574, 2007.

\bibitem[GW13]{GutoskiW2013}
G.~Gutoski and X.~Wu.
\newblock Parallel approximation of min-max problems.
\newblock {\em Computational Complexity}, 2(22):385--428, 2013.

\bibitem[JW09]{JainW2009}
R.~Jain and J.~Watrous.
\newblock Parallel approximation of non-interactive zero-sum quantum games.
\newblock In {\em Proceedings of the 24th IEEE Conference on Computational
  Complexity}, pages 243--253, 2009.

\bibitem[KM92]{KollerM1992}
D.~Koller and N.~Megiddo.
\newblock The complexity of two-person zero-sum games in extensive form.
\newblock {\em Games and Economic Behavior}, 4:528--552, 1992.

\bibitem[KSV02]{KitaevSV2002}
A.~Kitaev, A.~Shen, and M.~Vyalyi.
\newblock {\em Classical and Quantum Computation}, volume~47 of {\em Graduate
  Studies in Mathematics}.
\newblock American Mathematical Society, 2002.

\bibitem[KW00]{KitaevW00}
A.~Kitaev and J.~Watrous.
\newblock Parallelization, amplification, and exponential time simulation of
  quantum interactive proof systems.
\newblock In {\em Proceedings of the 32nd Annual ACM Symposium on Theory of
  Computing}, pages 608--617, 2000.

\bibitem[LY94]{LiptonY1994}
R.~Lipton and N.~Young.
\newblock Simple strategies for large zero-sum games with applications to
  complexity theory.
\newblock In {\em Proceedings of the Twenty-Sixth Annual ACM Symposium on
  Theory of Computing}, pages 734--740, 1994.

\bibitem[MW05]{MarriottW2005}
C.~Marriott and J.~Watrous.
\newblock Quantum {Arthur-Merlin} games.
\newblock {\em Computational Complexity}, 14(2):122--152, 2005.

\bibitem[NC00]{NielsenC2000}
M.~Nielsen and I.~Chuang.
\newblock {\em Quantum Computation and Quantum Information}.
\newblock Cambridge University Press, 2000.

\bibitem[RS98]{RussellS1998}
A.~Russell and R.~Sundaram.
\newblock Symmetric alternation captures {BPP}.
\newblock {\em Computational Complexity}, 7(2):152--162, 1998.

\bibitem[Tro12]{Tropp2012}
J.~Tropp.
\newblock User-friendly tail bounds for sums of random matrices.
\newblock {\em Foundations of Computational Mathematics}, 12(4):389--434, 2012.

\bibitem[Wat09]{Watrous2009}
J.~Watrous.
\newblock Quantum computational complexity.
\newblock In {\em Encyclopedia of Complexity and System Science}. Springer,
  2009.

\bibitem[Wat18]{Watrous2018}
J.~Watrous.
\newblock {\em {Theory of Quantum Information}}.
\newblock Cambridge University Press, 2018.

\bibitem[Wil17]{Wilde2017}
M.~Wilde.
\newblock {\em Quantum Information Theory}.
\newblock Cambridge University Press, second edition, 2017.

\end{thebibliography}

\newcommand{\etalchar}[1]{$^{#1}$}

\end{document}